\newtheoremstyle{j}%
{3pt}%
{3pt}%
{}%
{\parindent}%
{\bfseries}%
{.}%
{.5em}%
{}%
\theoremstyle{plain}
\newtheorem*{rem*}{Remark}
\newtheorem*{concl*}{Conclusion}
\newtheorem*{theorem*}{Theorem}
\newtheorem*{cor*}{Corollary}
\newtheorem*{algo*}{Algorithm}
\newtheorem{theorem}{Theorem}[section]
\newtheorem{lemma}[theorem]{Lemma}
\newtheorem{cor}[theorem]{Corollary}
\newtheorem{prop}[theorem]{Proposition}
\newtheorem{deff}[theorem]{Definition}
\newtheorem{rem}[theorem]{Remark}
\theoremstyle{definition}
\newtheorem*{example*}{Example}
\newtheorem{example}[theorem]{Example}
\newtheorem*{prob*}{Problem}
\newcommand{\Hil}{\mathcal{H}}
\newcommand{\N}{\mathbb{N}}
\newcommand{\R}{\mathbb{R}}
\newcommand{\C}{\mathbb{C}}
\newcommand{\E}{\mathbb{E}}
\newcommand{\ztilde}{\widetilde{z}}
\newcommand{\psitilde}{\widetilde{\psi}}
\newcommand{\Psitilde}{\widetilde{\Psi}}
\newcommand{\eps}{\varepsilon}
\newcommand{\bal}{\begin{align}}
\newcommand{\eal}{\end{align}}
\newcommand{\bM}{\begin{pmatrix}}
\newcommand{\eM}{\end{pmatrix}}
\newcommand{\onelettername}[1]{#1\aftergroup\@gobble}
\DeclareMathOperator{\ran}{ran}
\DeclareMathOperator{\Rea}{Re}
\DeclareMathOperator{\Id}{Id}
\DeclareMathOperator{\argmin}{argmin}
\DeclareMathOperator{\PSt}{{P_S^\eps}}
\DeclareMathOperator{\PAt}{{P_A^\eps}}
\DeclareMathOperator{\Pf}{P}
\DeclareMathOperator{\Pft}{{P^\eps}}
\numberwithin{equation}{section}
\newcommand{\operp}{\mathop{\bigcirc\kern-12.75pt\perp}\nolimits}
\numberwithin{equation}{section}
\definecolor{B}{rgb}{0.15,0.4,0.8}
\begin{document}
\selectlanguage{english}

\title{Stable reconstructions for the analysis formulation of $\ell^p$-minimization using redundant systems}
\author{
  Jackie Ma\footnote{ma@math.tu-berlin.de}\\
  Technische Universit\"{a}t Berlin\\ Department of Mathematics \\
  Stra\ss{}e des 17. Juni 136,   10623 Berlin
}

\maketitle

\begin{abstract}
\noindent
In compressed sensing sparse solutions are usually obtained by solving an $\ell^1$-minimization problem. Furthermore, 
the sparsity of the signal does need not be directly given. In fact, it is sufficient to have a signal that is sparse after an 
application of a suitable transform. In this paper we consider the stability of solutions obtained from $\ell^p$-minimization 
for arbitrary $0<p \leq1$. Further we suppose that the signals are sparse with respect to general redundant transforms 
associated to not necessarily tight frames. Since we are considering general frames the role of the dual frame has to be
additionally discussed. For our stability analysis we will introduce a new concept of so-called 
\emph{frames with identifiable duals}. Further, we numerically highlight a gap between the theory and the applications of 
compressed sensing for some specific redundant transforms.
\end{abstract}

\noindent
{\bf Keywords} Compressed sensing, redundant transforms, non-convex, restricted isometry property, wavelets, shearlets

\noindent
{\bf Mathematics Subject Classification} 42C15 41A65 46C05 42C40  
\section{Introduction}

Suppose we are interested in the reconstruction of a certain object of interest but we are not enabled to observe the signal
directly. We are instead allowed to sense it. Mathematically speaking, we understand this problem as solving a system of linear 
equations
\begin{align}
	Ax =y, \label{eq:Ax=y}
\end{align}
where $x \in \R^n$ represents the object of interest, $A \in \R^{m\times n}$ is the sensing matrix and $y \in \R^m$ represents 
the resulting observations. It is of course preferred to keep the amount of acquired data small, i.e. $m$ should be small. 
However, if $m$ is (possibly) much smaller than $n$, then \eqref{eq:Ax=y} usually does not have a unique solution in general. 
This issue can be resolved by imposing additional assumptions. One of such possible assumptions is the concept of 
\emph{sparsity}. We say a signal $x$ is  \emph{sparse}, if the vector $x$ contains only very few non-zero entries. This concept 
of sparsity is fundamental in the field of \emph{compressed sensing} \cite{CanRomTao2, CanTao, Don} where one aims to recover 
sparse signals using only very few measurements. With a view to solve \eqref{eq:Ax=y} a \emph{compressed sensing reconstruction}
is then typically computed by solving an \emph{equality constrained minimization} problem of the form
\begin{align}
	\min_x f(x) \quad \text{ subject to } \quad y=Ax, \tag{$\Pf$} \label{eq:f1}
\end{align}
where $f: \R^n \to \R^+$ is some non-negative function, the so-called \emph{prior-function} or \emph{regularizer}. However, it 
is often not feasible in practice to acquire the data $y$ without any errors or noise this means instead of having 
\eqref{eq:Ax=y} we rather have
\begin{align*}
	Ax \approx y
\end{align*}
where we use $a \approx b$ to indicate $a$ is close to $b$ but not necessarily equal. Therefore, in order to allow perturbations
during the measurement process, the model \eqref{eq:f1} is extended to
\begin{align}
	\min_x f(x) \quad \text{ subject to } \quad \| y - Ax \|_2 \leq \eps, \tag{$\Pft$} \label{eq:f2}
\end{align}
where $\eps>0$ is a control parameter for the error that may arise during the acquisition. Note that any element satisfying 
the constraint in \eqref{eq:f1} also satisfies the constraint in \eqref{eq:f2}.

A major question is of course how to choose the prior $f: \R^n \to \R^+$. In the spirit of compressed sensing the ideal choice 
would be
\begin{align*}
	f (x) = \| x \|_0 := \# \{ k \, : \, x_k \neq 0 \}
\end{align*}
since for this particular choice the problem  \eqref{eq:f1} returns the sparsest solution matching the constrained. 
Unfortunately, for $f (x) = \| x \|_0$ the \emph{$\ell^0$-problem} \eqref{eq:f1} is NP-hard \cite{FouRau}. A common escape is 
a convex relaxation of the $\ell^0$-problem by using, for example
\begin{align*}
	f (x) = \| x \|_1 = \sum_{k \leq n} |x_k|
\end{align*}
in \eqref{eq:f1}, which is also known as \emph{basis pursuit} and has been investigated widely in 
\cite{CanTao, Can, CanTao2, CanRomTao2, CanRomTao1}. However, for $0<p<1$ the $\ell^p$-quasi norm 
\begin{align*}
	\|x\|_p = \left( \sum_{k \leq n} |x_k|^p \right)^{1/p}, \quad 0<p<1
\end{align*}
is closer to the $\ell^0$-function $\|\cdot\|_0$ than $\| \cdot \|_1$ is, see Figure \ref{fig:lpballs}. Therefore it is of 
natural interest to study problem \eqref{eq:f1} for $f (x) = \|x\|_p$ which has also been done, for instance, in 
\cite{Cha, SaaChaYil, LyuLinShe}.

\begin{figure}[h]
\centering\includegraphics[scale=0.4]{./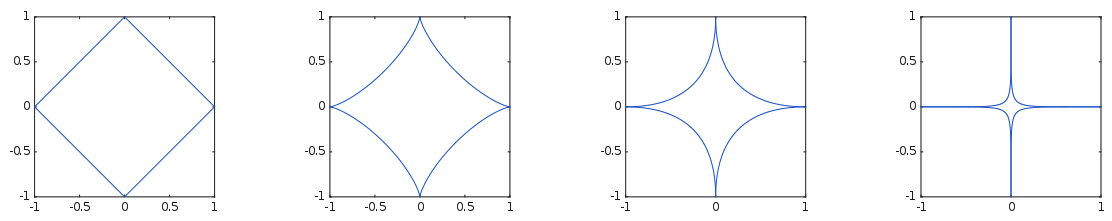}
\caption{From $\ell^1$ to $\ell^0$: Unit balls with respect to the $\ell^p$-quasi norm for decreasing $p$.}\label{fig:lpballs}
\end{figure}

%

The cases \eqref{eq:f1} and \eqref{eq:f2} with $f(x) = \|x \|_p$ and $0 \leq p \leq 1$ are usually called the 
\emph{synthesis formulation} and expect the vector $x$ to be sparse. However, it is observed in many applications, 
that the vector $x$ is only \emph{transform sparse}, i.e. sparse after an application of a suitable transform or often even 
only \emph{compressible}, i.e. very few entries are large and the rest is small in modulus but not necessarily zero.  For 
example, most natural images are \emph{compressible} with respect to multiscale systems such as wavelets \cite{Dau}, curvelets 
\cite{CanDon} or shearlets \cite{ShearletBook}. More precisely, the transform coefficients decay to zero with a certain rate as 
the scales increase, see Figure \ref{fig:SparsityLevels}. The transform sparse model is for instance used in 
\emph{magnetic resonance imaging} (MRI) where the sparsity of medical images with respect to a wavelet transform is utilized 
cf. \cite{LusDonPau}.
\begin{figure}[H]
\begin{subfigure}{0.5\textwidth}
 \includegraphics[width=0.85\textwidth]{./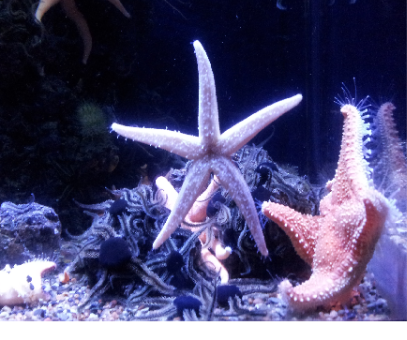}
\end{subfigure}
\begin{subfigure}{0.5\textwidth}
 \hspace*{-.8cm}\includegraphics[width=1.1\textwidth]{./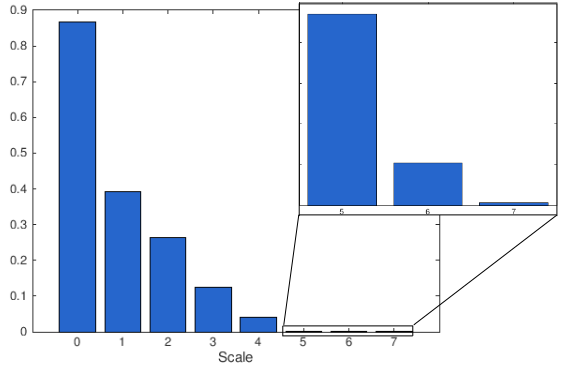}
\end{subfigure} 
\caption{\textbf{Left:} A natural image of size $2048\times2048$. \textbf{Right:} Distribution of the shearlet coefficients
of the best $N$-term approximation using 5\% of the largest coefficients in modulus.}
\label{fig:SparsityLevels}
\end{figure}
The transform sparse model leads to the so-called \emph{analysis formulation} that is problems \eqref{eq:f1} and \eqref{eq:f2} 
with
\begin{align*}
	f(x) = \| \Psi x \|_p = \left(\sum_{\lambda \leq N} |\langle x, \psi_\lambda\rangle|^p \right)^{1/p}, \quad 0<p \leq 1,
\end{align*}
where $\Psi: \R^n \to \R^N, x \mapsto (\langle x, \psi_\lambda \rangle)_{\lambda \leq N }$ is the \emph{analysis operator} of 
a spanning system $(\psi_\lambda)_{\lambda \leq N}$ for $\R^n$.

\subsection{Overview of related work}

Problems of the form \eqref{eq:f1} has been studied a lot in the literature for many different type of functions $f$.
We will now give an overview of those works that are most relevant and related to our work. We start with one of the earliest 
models in compressed sensing, that is the \emph{synthesis approach}.

\subsubsection*{Synthesis approach for $\ell^1$}

The synthesis formulation for the $\ell^1$-minimization problem is \eqref{eq:f2} with $f (\cdot )= \| \cdot \|_1$, i.e.
\begin{align}
	\min_x \| x \|_1 \quad \text{ subject to } \quad\| y - Ax \|_2 \leq \eps. \tag{$\ell^1$-$\PSt$}\label{eq:l1MinimizationSynthesis}
\end{align}
The minimization problem \eqref{eq:l1MinimizationSynthesis} is also known as \emph{(inequality constrained) basis pursuit} and 
is very well studied in compressed sensing. Stabilty results of the form
\begin{align}
	\| x^* - x\|_2 \leq C_{1} \eps + C_{2} \frac{\| x - x_s\|_1}{\sqrt{s}} \label{eq:Sterm}
\end{align}
 where $x^*$ denotes the solution to \eqref{eq:l1MinimizationSynthesis} and $x_s$ is the best $s$-term approximation, i.e. the 
 vector consisting of the $s$ largest entries of $x$, are known. This is for example the case, if $A$ satisfies the so-called 
 \emph{restricted isometriy property} (RIP) \cite{CanRomTao2, Fou, FouRau}. We recall its definition here for convenience.
\begin{deff}
Let  $A\in \C^{m \times n}$ be a  measurement matrix. If there exists $\delta_s$ such that
\begin{align}
	(1-\delta_s) \|x \|_2^2 \leq \| A x \|_2^2 \leq (1+ \delta_s)\| x\|_2^2, \label{eq:synthesisRIP}
\end{align}
for all $s$-sparse vectors $x$, then we say \emph{$A$ satisfies the RIP of order $s$} with \emph{RIP constant} $\delta_s$.
\end{deff}
Note that there are other properties used in the literature such as the \emph{null space property} \cite{CohDahDev} (and 
altered variants) to prove different stabilty estimates, see, for instance, \cite{FouRau}. 

\subsubsection*{Synthesis approach for $\ell^p$}

It has been noticed that the $\ell^p$-quasi norms yield stronger emphasis on the sparsity and thus in 
\cite{Cha, SaaChaYil,FouLai} the authors have studied the $\ell^p$-minimization problem
\begin{align}
	\min_x \| x \|_p^p \quad \text{ subject to } \quad \|y-Ax \|_2 \leq \varepsilon. \tag{$\ell^p$-$\PSt$} \label{eq:lpMinimizationSynthesis}
\end{align}
Again, the results that are fundamental for the user of this minimization problem
are stability results of the form
\begin{align}
	\| x - x^*\|_2^p \leq C_1(p) \eps^p + C_2(p) \frac{\|x-x_s\|_p^p}{s^{1-p/2}},\label{eq:Stermp}
\end{align}
where we have used the notations from above. A significant difference to the bound obtained in \eqref{eq:synthesisRIP}
is that the  constants $C_1(p)$ and $C_2(p)$ in \eqref{eq:Stermp} now depend on $p\in (0,1)$. However, for $p=1$ the 
constants in both cases agreee \cite{Cha, SaaChaYil,FouLai}. The assumption on the measurement matrix $A$ so that 
\eqref{eq:Stermp} holds is again based on the restricted isometry property.

The reader might wonder if such results are of any use in practice as the $\ell^p$-quasi norms turn the problem into a
non-convex minimization problem which are in general NP-hard \cite{DonXiaYin}, again. However, in the same work 
\cite{DonXiaYin} the authors showed that local minima can be computed in polynomial time by using an interior point method.
Further, its improvement over $\ell^1$ has also been numerically demonstrated in that paper.

\subsubsection*{Analysis approach for $\ell^1$}

Both minimization problems \eqref{eq:l1MinimizationSynthesis} and \eqref{eq:lpMinimizationSynthesis} assume the signal $x$
to be directly sparse, however, as we have already mentioned in the introduction this is often not the case in many practical
problems. One possible adaption of the synthesis problem is the analysis formulation of the minimization problem 
\eqref{eq:l1MinimizationSynthesis}, that is
\begin{align}
	\min_x \| \Psi x \|_1 \quad \text{ subject to } \quad \| y - Ax \|_2 \leq \eps, \tag{$\ell^1$-$\PAt$}\label{eq:l1MinimizationAnalysis}
\end{align}
where $\Psi$ is some sparsifying transform associated to a spanning system $(\psi_\lambda)_{\lambda \leq N}$ of $\R^n$. 
This problem has initially been studied in \cite{CanEldNeeRan} and the authors of that work proved stability results of the form
\begin{align}
	\| x^* - x\|_2 \leq C_{1} \eps + C_{2} \frac{\| \Psi x - (\Psi x)_s\|_1}{\sqrt{s}} \label{eq:StermPsi}
\end{align}
for  the case that $\Psi$ is the \emph{analysis operator} associated to a \emph{Parseval frame} and the measurement matrix $A$ 
satisfies the so-called \emph{$\Psi$-RIP}.
\begin{deff}
Let $\Psi \in \C^{N\times n}$ be a dictionary and $A\in \C^{m \times n}$ the measurement matrix. If there exists $\delta_s$ such that
\begin{align}
	(1-\delta_s) \|\Psi^* x \|_2^2 \leq \| A \Psi^* x \|_2^2 \leq (1+ \delta_s)\| \Psi^* x\|_2^2, \label{eq:RIP}
\end{align}
for all $s$-sparse vectors $x$, then we say \emph{$A$ satisfies the $\Psi$-RIP} of order $s$ with \emph{$\Psi$-RIP constant} $\delta_s$.
\end{deff}
This definition of the $\Psi$-RIP is the canonical analog to the restricted isometry property for direct sparse signals. Since
its invention an avalanche of research took place producing a wide and fruitful literature on this topic.
For instance, the relationship between the analysis formulation and the synthesis formulation has been studied in 
\cite{ElaMilRub,NamDavElaGri} as well as possible generalizations of the results of \cite{CanEldNeeRan} to general frames 
\cite{KabRauZha, Fou, KabRau, AldChePow, VaiPeyDosFad} and not only the case where $\Psi$ corresponds to to a Parseval frame. 
Such generalizations are of great importance as the they show that not all frames are equally well suited for compressed sensing.
By that we mean the number of measurements might vary siginificantly depending on the frame bounds, which is clearly not an
effect that enters the argument if one is dealing with Parseval frames. This observation already hints that the choice of the
sparsifying transform is a very delicate problem. We will next discuss what we are interested in in this work as well as what
has to be carefully considered in some practical problems in that relation.

\subsubsection*{Sparsifying transforms associated to arbitrary redundant systems}

In the above stated stability estimates \eqref{eq:Sterm}, \eqref{eq:Stermp} and \eqref{eq:StermPsi} it is desirable to have a 
fast decay of $\| x - x_s\|_1, \| x - x_s\|_p^p$ or $\| \Psi x - (\Psi x)_s\|_1$ respectively. When focusing on the latter 
expression it is implied that the transform coefficients $(\langle x, \psi_\lambda\rangle)_\lambda$ should decrease fast in 
magnitude in order to have a meaningful use of such a stability estimate. Recall that such an estimate was obtained using the 
$\Psi$-RIP. Therefore the desire of having a fast decay of the transform coefficients is needed on the one hand, however,
on the other hand we also want to have the $\Psi$-RIP to be fulfilled. The attentive reader will have noticed that there is a 
mismatch in this assumption. In fact, any signal of a given Hilbert space $\Hil$ can be written as
\begin{align}
 x  = \sum_\lambda \langle x, \psi_\lambda \rangle \psitilde_\lambda 
 = \sum_\lambda \langle x, \psitilde_\lambda \rangle \psi_\lambda, \label{eq:FrameIdentity}
\end{align}
where $(\psi_\lambda)_\lambda$ is a frame for the Hilbert space $\Hil$ and $(\psitilde_\lambda)_\lambda$ is a 
\emph{dual frame}, see \cite{Chr}. In particular, we have
\begin{align}
 x  \neq \sum_\lambda \langle x, \psi_\lambda \rangle \psi_\lambda \label{eq:NotFrameIdentity}
\end{align}
in general for non-tight frames. Equations \eqref{eq:FrameIdentity} and \eqref{eq:NotFrameIdentity} show that if we require a 
fast decay of the coefficients $(\Psi x)_s = (\langle x, \psi_\lambda \rangle)_s$, then we must nost assume the $\Psi$-RIP but 
rather the $\Psitilde$-RIP which is the $\Psi$-RIP with respect to the dual frame $(\psitilde_\lambda)_\lambda$ and not the 
primal frame $(\psi_\lambda)_\lambda$. This rather simple observation can yield a great problem in some cases. For example, it 
is not unusual that although a primal frame is known explicitly as well as a certain behaviour of the transform coefficients 
$(\langle x, \psi_\lambda \rangle)_\lambda$ but a dual $(\psitilde_\lambda)_\lambda$ is not explicitly given. Moreover, the 
theoretical behaviour of the dual coefficients can be completely different to the primal frame coefficients. The shearlet 
transform \cite{GuoKutLab2006, LLKW2007, Lim1, Lim2} is for instance a sparsifying transform that is often used in certain 
imaging applications \cite{EM1, ReiKieKin} as a sparsifying transform but an explicit formular of a dual does not exist -- for 
the case of compactly supported shearlets, as the band-limited ones form a Parseval frame.

This issue makes an assumption such as the restricted isometry property of the dual frame rather artificial. Switching the roles,
this means assuming the $\Psi$-RIP for the primal to hold, would on the other hand necessarily yield the minimization over dual 
frame coefficients. This again raises the question how these coefficients behave in general and if a dual is not known, then 
there is no point in minimizing over dual frame coefficients. 

Finally, we want to remark that if the computation of a dual is feasible, one could also minimize over all duals in order
to obtain the sparsest one. That again is the same as doing the synthesis formulation \cite{LiuMiLi}. However, we are 
interested in the analysis formulation in this paper and a discussion of the analysis versus synthesis formulation is 
not in the focus of this paper, for further interested on this matter we refer the reader to \cite{NamDavElaGri}.

\subsection{Contribution}

In this paper we combine \eqref{eq:l1MinimizationAnalysis} and \eqref{eq:lpMinimizationSynthesis} in order to obtain the 
best of both methods. While aiming for stability results we will also resolve the sparsity mismatch explained in the previous 
section that arises when arbitrary frames are considered. Note that there are two fundamentally different possibilities to 
combine the two problems \eqref{eq:l1MinimizationAnalysis} and \eqref{eq:lpMinimizationSynthesis}. Either one considers
\begin{align}
	\min_x \| \Psi x \|_p^p \quad \text{ subject to } \quad \| y-Ax\|_2 \leq \varepsilon \tag{$\ell^p$-$\PAt$} \label{eq:lpMinimizationAnalysis}
\end{align}
with a transform $\Psi$ that is associated to a fixed redundant system $(\psi_\lambda)_\lambda$ or
\begin{align}
	\min_x \| \Psitilde x \|_p^p \quad \text{ subject to } \| y-Ax\|_2 \leq \varepsilon \tag{$\ell^p$-$\widetilde{\PAt}$} \label{eq:lpDualMinimizationAnalysis}
\end{align}
which corresponds to the minimization over dual frame coefficients. The latter problem suggests to assume the $\Psi$-RIP 
with respect to the primal frame as then the sparsity pattern matches, cf. \eqref{eq:FrameIdentity}. Although stability is then
very much expected we shall show that this is indeed the case, for the sake of completeness. For \eqref{eq:lpMinimizationAnalysis}
the situation is different, although it comes without any surprise that a stability can again be obtained if one assume the 
$\Psi$-RIP with respect to the dual frame, but, we shall not do this. More importantly, we show that for certain types of 
frames the assumption of the $\Psi$-RIP with respect to the primal frame is enough even if one minimizes over the primal 
frame coefficients. The property that enables this approach are \emph{frames that have an identifiable dual}, see Section
\ref{sec:IdentifiableDuals}.

Our results can be connected to the literature as follows:
\begin{itemize}
	\item[--] For $p = 1$ our result generalizes the findings of \cite{CanEldNeeRan} to the case of arbitrary redundant 
	systems that are not necessarily a Parseval frame. However, if the system is a Parseval frame both results 
	(and constants) agree.
	\item[--] If $\Psi$ comes from a Parseval frame, then, due to the fact that all constants are given explicitly, we can 
	compare $\ell^p$-minimization with $\ell^1$-minimization. Therefore, by carefully handling the parameters we can obtain
	smaller constants as for $\ell^1$-minimization studied in \cite{CanEldNeeRan}.
	\item[--] Frames that have an identifiable dual are, coincidentally, a generalization of \emph{scalable frames} 
	\cite{KutOkoPhiTul} which have arisen in the literature from a completely different perspective.
\end{itemize}
%

Further, in practice it has been observed that the sparsity assumption might be impracticable, meaning that the sparsity $s$ of
the transform coefficients is larger than the dimension of the object $x$ that is to be reconstructed, if the transform is
truly redundant. We present a numerical experiment that shows that this is rather an issue coming from the implementation of
such transforms than the theory. This experiment in turn implies that there might a large gap between the theory and the 
applications of compressed sensing for redundant dictionaries, cf. Section \ref{sec:RedundancyCS}.

\subsection{Outline}

In Section \ref{sec:lpAnalysis} we present the new principles that we developed in order to obtain stability results for 
the analysis formulation of $\ell^p$-minimization \eqref{eq:lpMinimizationAnalysis} and the minimization over dual coefficients 
\eqref{eq:lpDualMinimizationAnalysis}. The proofs of our main results are presented in Section \ref{sec:Proofs} and Section 
\ref{section:Numerics} consists of a numerical experiment simulating the applicability and possible benefits of considering
$\ell^p$-minimization in practice. In Section \ref{sec:RedundancyCS} we discuss the role of redundancy and how it fits to the 
model of sparsity in compressed sensing for the case of the shearlet transform.

\subsubsection*{Notation}

Throughout this paper $\Psi :=(\psi_\lambda)_{\lambda \leq N} \subset \Hil$ denotes a \emph{frame} for some Hilbert space $\Hil$,
i.e. there exist (positive and finite) constants $c_1$ and $ c_2 $ such that
\begin{align*}
	c_1 \| x \|^2 \leq \sum \limits_{i=1}^N | \langle x, \psi_\lambda \rangle|^2 \leq c_2 \|x\|^2 \quad \forall \, x \in \Hil.
\end{align*}
The constants $c_1$ and $c_2$ are called \emph{lower frame bound} and \emph{upper frame bound}, respectively. If $c_1 = c_2 =1$,
then $\Psi$ is called \emph{Parseval frame}. With a slight abuse of notation we also denote the \emph{analysis operator} by
\begin{align*}
	\Psi: \Hil \to \C^N, \quad	x \mapsto (\langle x, \psi_\lambda \rangle)_{\lambda \leq N},
\end{align*}
and the synthesis operator $\Psi^*$ by
\begin{align*}
	\Psi^*: \C^N \to \Hil, \quad	(c_\lambda)_{\lambda \leq N} \mapsto \sum_{\lambda \leq N} c_\lambda \psi_\lambda .
\end{align*}
In most cases we will have $\Hil = \C^n$. If $x$ is supposed to be an image in $\C^{n\times n}$ we will consider 
$\Hil = \C^{n^2}$ the vectorization of the $n$ by $n$ images. 

Note that if the elements $\psi_1, \ldots, \psi_N$ are given as vectors in $\C^{n^2}$, then the synthesis operator is simply the matrix $\Psi^* \in \C^{n^2 \times N}$ with columns $\psi_\lambda, \lambda \leq N$.

\section{$\ell^p$-minimization: the analysis formulation}\label{sec:lpAnalysis}

This section contains the main concepts and results of this paper. We proceed by first proposing a new concept that is used 
later on to resolve the sparsity mismatch.

\subsection{Identifiable duals}\label{sec:IdentifiableDuals}

Let us recall the issue mentioned in the introduction. In the minimization problem \eqref{eq:lpMinimizationAnalysis} we are 
minimizing over primal frame coefficients $(c_\lambda)_{\lambda \leq N} = (\langle x, \psi_\lambda\rangle)_{\lambda \leq N}$ and
suppose we wish to prove stability estimates based on the $\Psi$-RIP. Then we notice that the $\Psi$-RIP is based on the 
sparsity of signals in the reconstruction system $\Psi=(\psi_\lambda)_{\lambda \leq N}$ due to the synthesis operation which is
the adjoint but not an inverse in the general frame case. More precisely, the following problem arises: The (assumed) sparse 
coefficient vector $(c_\lambda)_\lambda$ that we obtain from the minimization problem does not give rise to a sparse 
representation with respect to the reconstruction system $\Psi$ but rather to a dual $\Psitilde$, cf. \eqref{eq:FrameIdentity}
and \eqref{eq:NotFrameIdentity}. In fact, the dual frame coefficients might have a completely different sparsity pattern. In 
order to circumvent this problem we make the following definition.

\begin{deff}\label{def:IdentifiableDuals}
Let $I$ be some index set. We say a frame $\Psi = (\psi_\lambda)_{\lambda \in I}$ for $\Hil$ has an \emph{identifiable dual} if 
there exists a dual frame $\Psitilde$ such that for all $x \in \Hil$ and any $\lambda \in I$ the coefficient in modulus 
$|\langle x, \psitilde_\lambda\rangle|$ can be bounded from above and 
below by $|\langle x, \psi_\lambda \rangle|$, i.e. there exist constants $d_1, d_2>0$ such that
\begin{align}
 d_1 |\langle x, \psi_\lambda \rangle | \leq |\langle x, \psitilde_\lambda \rangle | \leq d_2 | \langle x, \psi_\lambda \rangle|
\label{eq:Identifiability}
 \end{align}
for all $x \in \Hil$ and all $\lambda \in I$.
\end{deff}

If $\Psi$ is a frame that has an identifiable dual, then this property ensures that there exists a dual such that the sparsity
of the primal frame coefficients leads to a sparse representation in a dual system. We proceed with discussing its relation to
other special frames.

It is obvious that every tight frame has an identifiable dual. More generally, every \emph{scalable frame} (cf. Definition 
\ref{def:ScalableFrames}) has an identifiable dual. 

\begin{deff}[\cite{KutOkoPhiTul}]\label{def:ScalableFrames}
 A frame $\Psi = \{ \psi_\lambda \, : \, \lambda \in \N\}$ for $\Hil$ is called \emph{scalable} if there exists scalars 
 $c_\lambda \geq 0, \lambda \in \N$ such that 
 \begin{align*}
 \{ c_\lambda \psi_\lambda \, : \, \lambda \in \N\}
 \end{align*}
 forms a Parseval frame for $\Hil$. If there exists $\delta >0$ such that $c_j > 0$, then we call 
 $\Psi$ \emph{positively scalable}.
\end{deff}

The following result is trivial to prove and we therefore omit its proof.
\begin{prop}\label{prop:IdentifiableDuals}
Every scalable frame has an identifiable dual.
\end{prop}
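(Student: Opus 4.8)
The plan is to show directly that the Parseval-scaling scalars exhibit a dual frame satisfying the identifiability inequality \eqref{eq:Identifiability}. Suppose $\Psi = \{\psi_\lambda : \lambda \in \N\}$ is scalable with scalars $c_\lambda \geq 0$ such that $\{c_\lambda \psi_\lambda : \lambda \in \N\}$ is a Parseval frame for $\Hil$. First I would recall that for a Parseval frame $\{\phi_\lambda\}$ one has the reconstruction identity $x = \sum_\lambda \langle x, \phi_\lambda\rangle \phi_\lambda$ for all $x \in \Hil$; applying this with $\phi_\lambda = c_\lambda \psi_\lambda$ gives
\begin{align*}
	x = \sum_\lambda \langle x, c_\lambda \psi_\lambda\rangle\, c_\lambda \psi_\lambda = \sum_\lambda \langle x, c_\lambda^2 \psi_\lambda \rangle\, \psi_\lambda .
\end{align*}
This exhibits $\Psitilde := \{c_\lambda^2 \psi_\lambda : \lambda \in \N\}$ as a dual system for $\Psi$ in the sense of the frame identity \eqref{eq:FrameIdentity}; one should also note it is a bona fide frame (it is a Parseval frame rescaled by the bounded, bounded-below factors $c_\lambda$, at least on the positively scalable part — see the obstacle below).

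The key computation is then immediate: for every $x \in \Hil$ and every $\lambda$,
\begin{align*}
	|\langle x, \psitilde_\lambda \rangle| = |\langle x, c_\lambda^2 \psi_\lambda\rangle| = c_\lambda^2\, |\langle x, \psi_\lambda\rangle| .
\end{align*}
So \eqref{eq:Identifiability} holds with $d_1 = \inf_\lambda c_\lambda^2$ and $d_2 = \sup_\lambda c_\lambda^2$, provided these are a positive finite pair. The upper bound $\sup_\lambda c_\lambda < \infty$ follows because $\{c_\lambda \psi_\lambda\}$ is a Parseval frame (hence each $\|c_\lambda \psi_\lambda\| \leq 1$, so $c_\lambda \leq 1/\|\psi_\lambda\|$, and in a frame the norms $\|\psi_\lambda\|$ are bounded below), and in the finite-dimensional setting $\Hil = \C^n$ relevant to this paper the index set is finite anyway, so both the inf and sup are attained.

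The main obstacle — really the only subtle point — is the lower bound $d_1 > 0$, i.e. ensuring $\inf_\lambda c_\lambda^2 > 0$. A general scalable frame may use $c_\lambda = 0$ for some $\lambda$, which would force $d_1 = 0$ and also drop those vectors from $\Psitilde$, so $\Psitilde$ need not even span. This is exactly why the paper singled out \emph{positively scalable} frames in Definition \ref{def:ScalableFrames}: for those, $c_\lambda > 0$ for all $\lambda$ and (in the finite-dimensional case, or under the uniform lower bound built into that definition) $\inf_\lambda c_\lambda > 0$, so $\Psitilde$ is genuinely a frame and \eqref{eq:Identifiability} holds with strictly positive $d_1$. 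I would therefore state the proposition for positively scalable frames (or restrict to $\Hil = \C^n$ with all $c_\lambda > 0$), observe the one-line computation above, and remark that the tight-frame case is recovered by taking all $c_\lambda$ equal — which is why the paper calls the proof trivial and omits it.
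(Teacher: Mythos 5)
Your construction $\psitilde_\lambda = c_\lambda^2\psi_\lambda$, with $d_1=\inf_\lambda c_\lambda^2$ and $d_2=\sup_\lambda c_\lambda^2$, is exactly the one-line argument the paper has in mind when it declares the proposition trivial and omits the proof, so on that level you have reproduced the intended reasoning. More importantly, the obstacle you flag is not mere caution: as literally stated, with Definition \ref{def:ScalableFrames} permitting $c_\lambda=0$, the proposition is false, and no choice of dual can rescue it. Indeed, the upper inequality in \eqref{eq:Identifiability} tested against $x\perp\psi_\lambda$ forces any identifiable dual to satisfy $\psitilde_\lambda=\alpha_\lambda\psi_\lambda$, and the lower inequality forces $|\alpha_\lambda|\geq d_1>0$; the duality relation then reads $x=\sum_\lambda\overline{\alpha_\lambda}\langle x,\psi_\lambda\rangle\psi_\lambda$ for all $x$. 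For the frame $\Psi=\{e_1,e_2,(e_1+e_2)/\sqrt{2}\}$ in $\R^2$, which is scalable with $c_1=c_2=1$, $c_3=0$, the off-diagonal entry of this operator identity forces $\alpha_3=0$, contradicting $|\alpha_3|\geq d_1>0$; so this scalable frame has no identifiable dual. Hence the proposition really does require positive scalability (scalars bounded away from zero, which is presumably what the garbled clause ``there exists $\delta>0$ such that $c_j>0$'' in Definition \ref{def:ScalableFrames} is meant to say), and under that hypothesis your argument is complete and correct, including the observation that in the finite-dimensional setting $\Hil=\C^n$ the infimum and supremum of the $c_\lambda^2$ are attained.
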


The question that naturally arises is whether the two concepts agree. This is indeed not the case as the following example shows.
\begin{example}
 It is easy to check that the set of vectors 
 \begin{align*}
\Psi =  \left\{ \begin{pmatrix} 
           2 \\ 1
          \end{pmatrix},
          \begin{pmatrix} 
           2\\ 2
          \end{pmatrix},
          \begin{pmatrix} 
           1 \\ 2
          \end{pmatrix}
\right\}
 \end{align*}
is frame for $\R^2$ that has an identifiable dual, e.g. 
\begin{align*}
\Psitilde =  \left\{ \begin{pmatrix} 
           2 \\ 1
          \end{pmatrix},
          \begin{pmatrix} 
           -2\\ -2
          \end{pmatrix},
          \begin{pmatrix} 
           1 \\ 2          \end{pmatrix}
\right\}
 \end{align*}
with $d_1 = d_2 = 1$ but it is neither a tight nor a scalable frame. The fact that it is not scalable follows from Theorem 3.6 
of \cite{KutOkoPhiTul}.
\end{example}

As we will also see later in this article the identifiability is actually not needed for the entire space but only to certain
elements, cf. Remark \ref{rem:IdUsage}

Next, we show how this concept relates to the \emph{restricted isometry property} and in particular the number 
of measurements that are needed in order for it to hold.

\subsection{Restricted isometry property}

The restricted isometry property was first introduced by Cand\`{e}s and Tao in \cite{CanTao2} based on the assumption that often
the signal that is to be reconstructed is sparse. It was then later generalized by Cand\`{e}s et al. in \cite{CanEldNeeRan} to 
transform sparse signals. This covers a large class of signals considered in imaging tasks, since the images considered are
often sparse in, for instance, a wavelet domain.
\begin{deff}
Let $\Psi \in \C^{N\times n}$ be a dictionary and $A\in \C^{m \times n}$ the measurement matrix. If there exists $\delta_s$ 
such that
\begin{align}
	(1-\delta_s) \|\Psi^* x \|_2^2 \leq \| A \Psi^* x \|_2^2 \leq (1+ \delta_s)\| \Psi^* x\|_2^2, \label{eq:RIP}
\end{align}
for all $s$-sparse vectors $x$, then we say \emph{$A$ satisfies the $\Psi$-RIP} of order $s$ with \emph{$\Psi$-RIP constant} 
$\delta_s$.
\end{deff}

The existence of matrices satisfying the $\Psi$-RIP is given by the following theorem which will also be needed for our main 
result. We move the proof to the appendix as it very close to the proof given in \cite{KraNeeWar}.

\begin{theorem}\label{theorem:RIPKraNeeWar2}
Fix a probability measure $\nu$ on $\{1, \ldots, N \}$, a sparsity level $s < N$, and a constant $0<\delta<1$. Let 
$\Psi=(\psi_\lambda)_{\lambda \leq N}$ be a frame with frame bounds $c_1$ and $c_2$ and let $A$ be an $n \times n$ matrix whose 
rows $(r_i)_{i \leq n}$ satisfy
\begin{align*}
	\sum_i r_{k}(i)r_j(i)\nu_i = \delta_{j,k}.
\end{align*}
Furthermore, let $K$ be a number such that $\| \psi_\lambda \| \leq K$ for all $\lambda \leq N$ and 
\begin{align}
L = \sup_{ \substack{\| \Psi^* c \| =1 \\ \| c \|_0 \leq s}} \frac{\|  (\Psi \Psi^* c)_\lambda \|_1}{\sqrt{s}}.\label{eq:LocalizationFactor}
\end{align}
 If  $\widetilde{A}$ is an $m\times n$ submatrix whose rows are subsampled from $A$ according to $\nu$, then there exists 
a $C>0$ independent of all relevant parameters such that for
\begin{align}
m &\geq \frac{CK}{c_1} \delta^{-2} s L^2 \max\{ \log^3(sL^2) \log(N), \log(1/\gamma)\} \label{eq:RipNumMeas}
\end{align}
the normalized submatrix $\sqrt{\frac{1}{m}} \widetilde{A}$ satisfies the $\Psi$-RIP of order $s$ with $\Psi$-RIP constant 
$\delta$ with probability $1-\gamma$. Furthermore, if the frame possess an identifiable dual with upper constant $d_2$, then 
$1/c_1$ can be replaced by $d_2$ in \eqref{eq:RipNumMeas}, i.e. the result holds for
\begin{align}
m &\geq {d_2 C K} \delta^{-2} s L^2 \max\{ \log^3(sL^2) \log(N), \log(1/\gamma)\} \label{eq:RipNumMeasNew}
\end{align}
\end{theorem}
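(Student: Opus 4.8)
The plan is to follow, essentially verbatim, the construction of $\Psi$-RIP matrices for Parseval frames in \cite{KraNeeWar}, and to isolate the two points at which the Parseval hypothesis is used there. To establish that $\sqrt{1/m}\,\widetilde{A}$ satisfies the $\Psi$-RIP of order $s$ with constant $\delta$ it suffices to show that
\begin{align*}
	\sup_{u \in T_s} \left| \frac{1}{m}\|\widetilde{A} u\|_2^2 - \|u\|_2^2 \right| \leq \delta,
	\qquad
	T_s := \left\{ \Psi^* c \ : \ \|c\|_0 \leq s, \ \|\Psi^* c\|_2 = 1 \right\},
\end{align*}
with probability at least $1-\gamma$. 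By construction the orthonormality relation $\sum_i r_k(i)r_j(i)\nu_i = \delta_{j,k}$ makes $\frac{1}{m}\|\widetilde{A}u\|_2^2$ an unbiased estimator of $\|u\|_2^2$, so the left-hand side is the supremum of a centred second-order chaos process indexed by $T_s$. First I would invoke the Rudelson--Vershynin-type bound for such suprema used in \cite{KraNeeWar}, which reduces matters to controlling (i) the $\ell^\infty$-radius $\operatorname{rad}(T_s) := \sup_{u \in T_s}\max_i |\langle r_i, u\rangle|$ and (ii) the $\gamma_2$-functional of $T_s$ --- equivalently, the Dudley metric-entropy integrals of $T_s$ --- with respect to the Euclidean metric and to the semi-metric $\|u\|_X := \max_i |\langle r_i, u\rangle|$.

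Next I would carry out the covering-number estimates. Since $T_s \subseteq \bigcup_{|\Lambda| = s}\bigl(V_\Lambda\cap\{x : \|x\|_2 = 1\}\bigr)$ with $V_\Lambda := \spann\{\psi_\lambda : \lambda \in \Lambda\}$, and each $V_\Lambda$ has dimension at most $s$, taking the union over the $\binom{N}{s}$ index sets $\Lambda$ produces the $s\log N$ and the $\log^3$ factors in \eqref{eq:RipNumMeas}. The coherence-type quantity that governs both the $\|\cdot\|_X$-entropy and the radius is controlled by the localization factor: for $u = \Psi^* c \in T_s$ one has $\langle r_i, u\rangle = \langle \Psi r_i, c\rangle$, and the sums that arise are controlled by $\|\Psi\Psi^* c\|_1 \leq L\sqrt{s}$, which is exactly \eqref{eq:LocalizationFactor}. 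The uniform bound $\|\psi_\lambda\| \leq K$ enters through the pointwise estimates $|\langle r_i, \psi_\lambda\rangle| \leq \|r_i\|\,K$ and accounts for the factor $K$. In \cite{KraNeeWar} the frame is Parseval, so $\Psi\Psi^*$ is an orthogonal projection and $\|\psi_\lambda\|\leq 1$; there these contributions are trivial, and the only genuinely new ingredient here is the lower frame bound.

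It then remains to produce the factor $1/c_1$, and to replace it by $d_2$ when a dual is identifiable. In the Parseval case one passes between a signal $u \in \ran\Psi^*$ and an associated coefficient vector through the projection identity; for a general frame this passage must instead invoke the lower frame inequality $c_1\|u\|_2^2 \leq \|\Psi u\|_2^2$, and tracking this one inequality through the bound on $\operatorname{rad}(T_s)$ (and on the $\|\cdot\|_X$-entropy) costs precisely a factor $1/c_1$, yielding \eqref{eq:RipNumMeas}. If $\Psi$ has an identifiable dual $\Psitilde$ with upper constant $d_2$, I would instead use the reconstruction identity $u = \Psi^*\Psitilde u$ for $u \in \ran\Psi^*$ to write
\begin{align*}
	\langle r_i, u\rangle = \langle r_i, \Psi^*\Psitilde u\rangle = \sum_{\lambda} \langle r_i, \psi_\lambda\rangle\,\overline{\langle u, \psitilde_\lambda\rangle},
\end{align*}
and then apply $|\langle u, \psitilde_\lambda\rangle| \leq d_2\,|\langle u, \psi_\lambda\rangle|$ from \eqref{eq:Identifiability}; this replaces the single use of $c_1\|u\|_2^2 \leq \|\Psi u\|_2^2$ by an estimate that costs only the factor $d_2$, leaves the rest of the argument untouched, and gives \eqref{eq:RipNumMeasNew}.

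The main obstacle is bookkeeping rather than a new idea. One must verify that non-tightness of $\Psi$ enters the \cite{KraNeeWar} argument at exactly the one place described above, that $\operatorname{rad}(T_s)$ and both metric-entropy integrals are genuinely bounded by combinations of $K$, $L$ and $1/c_1$ (respectively $d_2$) and nothing worse, and that the resulting tail estimate collapses to the stated form $\max\{\log^3(sL^2)\log(N), \log(1/\gamma)\}$. Since the skeleton is identical to \cite{KraNeeWar}, this amounts to a careful comparison with that proof, which is why it is deferred to the appendix.
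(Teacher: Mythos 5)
Your proposal follows essentially the same route as the paper's appendix proof: both reduce the claim to bounding the supremum of the centred chaos process over the set of normalized $s$-sparse synthesis images, both run the symmetrization, Dudley covering-number and concentration machinery of \cite{KraNeeWar} unchanged, and both identify the single point where non-tightness enters, namely the pointwise bound $|\langle r_i,u\rangle|\le (K/c_1)\,L\sqrt{s}$ in general and $d_2\,K L\sqrt{s}$ under identifiability, obtained from the localization factor \eqref{eq:LocalizationFactor}. The only (immaterial) difference is that the paper expands $\langle r_i,u\rangle$ against the canonical dual and applies the identifiability bound to the rows $r_i$ --- which is what Remark \ref{rem:IdUsage} relies on --- whereas you apply it to $u\in\ran\Psi^*$; under the hypothesis of Definition \ref{def:IdentifiableDuals} both yield the same estimate.
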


Theorem \ref{theorem:RIPKraNeeWar2} states that the $\Psi$-RIP can be guaranteed provided the number of measurements
scales properly with the sparstity $s$. In fact, there are some further subtleties that have to be considered. 
First, it is not only the sparsity $s$ that is important but also the localization factor $L$. This factor can be controlled,
for instance, if the frame is \emph{localized}, \cite{ForGro}. In particular, if the Gramian has a strong off-diagonal decay, 
then $L$ can be further characterized with respect to this decay rate as $L$ measures how the Gramian distorts the sparsity
structure. Second, the dependency of the lower frame bound by $1/c_1$ can also be controlled by localization arguments.
Note that by Greshgorin's Circle Theorem there exists a $\lambda \in \{1, \ldots, N\}$ such that
\begin{align*}
\left| \frac{1}{c_1} - \|\psi_\lambda \|^2 \right| \leq \sum_{\mu \neq \lambda} |\langle \psitilde_\lambda, \psitilde_\mu \rangle |
\end{align*}
and therefore
\begin{align*}
	\frac{1}{c_1} \leq \sum_{\mu \leq N} |\langle \psitilde_\lambda, \psitilde_\mu \rangle|.
\end{align*}
Thus frames with strong localization properties such that the dual frame is also strongly localized are of particular interest.
It was also proven in \cite{ForGro} that if the primal frame $(\psi_\lambda)_{\lambda \in \N}$ is 
\emph{polynomially self-localized}, i.e. there exists $C>0$ such that
\begin{align*}
	|\langle \psi_\lambda , \psi_\mu \rangle | \leq C \frac{1}{(1 + |\lambda - \mu|)^r},
\end{align*}
for some $r \in \N$ and all $\lambda \neq \mu$, then the dual frame $(\psitilde_\lambda)_{\lambda \in \N}$ will also be 
polynomially self-localized. This result was extended in \cite{ParLoc} to more general geometries, i.e. to other index 
functions than the euclidean distance. 

 Theoretically, $s$ can be considered as the dominating value for the number of measurements. However, in practice it is a 
 priori not clear that it can always be assumed to be smaller than $n$. In fact, since $N$ is usually much much larger than 
 $n$ the analysis coefficients must be extremely sparse to make this result useful. In Section \ref{sec:RedundancyCS} 
 we will demonstrate that this is indeed an issue in many imaging problems, but we also show that it has to be further studied 
 from a practical point of view and is not really an issue of the theory.

The last concept that we need before we can prove the stability result is the notion of \emph{stable frame bounds}. This is 
merely a concept used in order to make all constants and variables feasible. It also shows how large $s$ can be in the worst
case.

\subsection{Stable frame bounds}

As we are now not necessarily dealing with tight frames or Parseval frames, the ratio of the frame bounds $c_1/c_2$ is not 
constant and in fact plays a vital role in the estimates for the proof of our main results. In order to have some control over 
this ratio we make the following definition.

\begin{deff}
We say a frame  has \emph{$q$-controllable frame bounds} if  there exists a $q \in \{1, \ldots, N\}$ such that
\begin{align}
	\frac{c_1^{p/(2-p)}}{c_2^{p/(2-p)}} \geq \frac{q}{N}. \label{eq:AssumptionFrameBounds}
\end{align}
\end{deff}
Note that \eqref{eq:AssumptionFrameBounds} is, same as the identifiability condition in Section \ref{sec:IdentifiableDuals}, 
always fulfilled for tight frames, in particular it holds for every orthonormal basis or Parseval frame. 

We will need the controllability of the frame bound in order to make the implicit assumptions on the range of the sparsity $s$ 
more precise, cf. Theorem \ref{theorem:lpMinimizationAnalysis}.

Furthermore, the controllability of the frame bounds is also of numerical importance since the number of frame elements $N$ is
usually very large in practice and in order to have numerical stability the frame ratio $c_1/c_2$ should not be too small. In 
Section \ref{section:Numerics} we will verify \eqref{eq:AssumptionFrameBounds} for \emph{digital shearlets}, \cite{ShearletBook}. 
Note that the term on the left-hand side of \ref{eq:AssumptionFrameBounds} decreases as $p$ increases.

\subsection{Stability for analysis formulation}

Based on the previously introduced principles of identifiable duals and controllable frame bounds, we can prove the following theorem that guarantees stability of solutions obtained by \eqref{eq:lpMinimizationAnalysis}. The proof of the theorem is postponed to Section \ref{sec:Proofs}. 

\begin{theorem}\label{theorem:lpMinimizationAnalysis}
Let $\Psi$ be a frame that has $q$-controllable frame bounds and has an identifiable dual. Moreover, let $A$ be a measurement matrix satisfying the $\Psi$-RIP with $\delta_{\nu} < 0.5$  where $\nu = s \frac{2\cdot (1+2^{-p})^{1/(p-2)}c_2^{p/(p-2)}d_2^{p/(p-2)}}{c_1^{p/(p-2)}}$ and  $s<\frac{q}{2d_2^2 (1+2^{-p})^{1/(p-2)}}$ and $d_2>0$ is the upper constant in the identifiability conditions. Then the solution $x^*$ of \eqref{eq:lpMinimizationAnalysis} satisfies
\begin{align*}
	\| x - x^* \|_2^p \leq C_1(p) \eps^p + C_2(p) \frac{\| \Psi x - (\Psi x)_s\|_p^p}{s^{1-p/2}}
\end{align*}
for some positive constants $C_1(p)$ and  $C_2(p)$ that depend on $p$, the frame bounds $c_1, c_2$, the $\Psi$-constant $\delta_{\nu}$, the sparsity $s$, the controllability parameter $q$, and the constants from the identifiability condition. 
\end{theorem}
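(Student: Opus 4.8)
The plan is to adapt the by-now standard argument of Cand\`{e}s--Eldar--Needell--Randall \cite{CanEldNeeRan} and its $\ell^p$ refinements \cite{Cha,SaaChaYil,FouLai}, the genuinely new ingredient being that the \emph{identifiable dual} is exactly what converts a bound on the primal analysis coefficients into a bound on a sparse \emph{synthesis} representation, to which the $\Psi$-RIP can then be applied. Set $h:=x^*-x$. Since $x$ and $x^*$ are both feasible for \eqref{eq:lpMinimizationAnalysis}, the triangle inequality gives the tube estimate $\|Ah\|_2\le 2\eps$. Let $T_0$ index the $s$ largest entries of $\Psi x$ in modulus, so that $\|(\Psi x)_{T_0^c}\|_p^p=\|\Psi x-(\Psi x)_s\|_p^p$. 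Using the minimality $\|\Psi x^*\|_p^p\le\|\Psi x\|_p^p$, together with the quasi-triangle inequality $\|a+b\|_p^p\le\|a\|_p^p+\|b\|_p^p$ for $0<p\le1$ (and its reverse $\|a+b\|_p^p\ge\|a\|_p^p-\|b\|_p^p$) applied separately on $T_0$ and $T_0^c$, I would first derive the $\ell^p$ cone condition
\begin{align*}
	\|(\Psi h)_{T_0^c}\|_p^p\le\|(\Psi h)_{T_0}\|_p^p+2\,\|\Psi x-(\Psi x)_s\|_p^p .
\end{align*}

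The second step resolves the sparsity mismatch. Let $\Psitilde$ be a dual frame realising the identifiability \eqref{eq:Identifiability}; by the frame identity \eqref{eq:FrameIdentity} we may write $h=\Psi^*c$ with $c:=\Psitilde h$, an honest synthesis representation. Since $d_1|\langle h,\psi_\lambda\rangle|\le|\langle h,\psitilde_\lambda\rangle|\le d_2|\langle h,\psi_\lambda\rangle|$ entrywise, the decreasing rearrangement of $c$ is comparable to that of $\Psi h$, so the cone condition transfers, up to powers of $d_1,d_2$, to the coefficient vector $c$; in particular $\|c_{T_0^c}\|_p^p\le d_2^p(\|(\Psi h)_{T_0}\|_p^p+2\|\Psi x-(\Psi x)_s\|_p^p)$. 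Now split $c$ into a top part supported on a set of size comparable to $\nu$ and a tail, the latter partitioned into consecutive blocks $T_1,T_2,\dots$ of size $s$ ordered by decreasing magnitude; the elementary estimate $\|c_{T_j}\|_2\le s^{1/2-1/p}\|c_{T_{j-1}}\|_p$ and the inequality $\sum_j a_j\le(\sum_j a_j^p)^{1/p}$ (again $0<p\le1$) control the sum of the tail blocks by $s^{1/2-1/p}\|c_{T_0^c}\|_p$.

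Next I would bring in the $\Psi$-RIP. Writing $h$ as the synthesis of its top part plus the tail, the identity $A(\text{top part})=Ah-A(\text{tail})$, the lower $\Psi$-RIP bound of order $\nu$ on the top part, the upper $\Psi$-RIP bound on each $s$-sparse tail block, the tube estimate $\|Ah\|_2\le2\eps$, and the upper frame bound $\|\Psi^*c_{T_j}\|_2\le\sqrt{c_2}\|c_{T_j}\|_2$, together with the tail bound above, yield an estimate of the form $\|h\|_2\le C_\eps\eps+C_\sigma\,s^{1/2-1/p}\|c_{T_0^c}\|_p$; for sharp constants this step should use the usual near-orthogonality/inner-product argument rather than a crude triangle inequality. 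Inserting the transferred cone bound for $\|c_{T_0^c}\|_p$ and the estimate $\|(\Psi h)_{T_0}\|_p\le s^{1/p-1/2}\sqrt{c_2}\,\|h\|_2$ (a support of size $s$ plus the upper frame bound) then produces a self-referential inequality
\begin{align*}
	\|h\|_2\le C_\eps\,\eps+C\,\|h\|_2+C_\sigma'\,s^{1/2-1/p}\,\|\Psi x-(\Psi x)_s\|_p ,
\end{align*}
with $C_\eps,C,C_\sigma'$ depending only on $p$, the frame bounds $c_1,c_2$, the identifiability constant $d_2$, and $\delta_\nu$.

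The main obstacle — and the reason for the deliberate, rather baroque form of $\nu$ and of the upper bound on $s$ — is the concluding bookkeeping. One must verify that $\delta_\nu<0.5$, the constraint $s<q/(2d_2^2(1+2^{-p})^{1/(p-2)})$ and the $q$-controllability \eqref{eq:AssumptionFrameBounds} together force $\nu\le N$ (so that assuming a $\Psi$-RIP of order $\nu$ is meaningful at all), and — the delicate point — force the self-referential constant $C$ to be strictly less than $1$. This is precisely where the interplay of $p$, the frame-bound ratio $c_1/c_2$ and $d_2$ is used; the factor $(1+2^{-p})^{1/(p-2)}c_2^{p/(p-2)}d_2^{p/(p-2)}c_1^{-p/(p-2)}$ appearing in $\nu$ is the inflation of the effective sparsity incurred by passing from $\Psi h$ to $c=\Psitilde h$ and from $\ell^p$-coefficient estimates to $\ell^2$-signal estimates, and $\nu$ must be chosen large enough that the tail becomes a small enough fraction to absorb. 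Granting $C<1$, one absorbs $C\|h\|_2$ into the left-hand side to obtain $\|h\|_2\le C_1'\eps+C_2'\,s^{1/2-1/p}\|\Psi x-(\Psi x)_s\|_p$, and raising to the $p$-th power and using $(a+b)^p\le a^p+b^p$ turns this into $\|h\|_2^p\le C_1(p)\eps^p+C_2(p)\,s^{p/2-1}\|\Psi x-(\Psi x)_s\|_p^p$, which is the assertion, with all constants traceable through the chain above.
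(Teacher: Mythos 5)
Your proposal follows essentially the same route as the paper: the cone and tube constraints, the block decomposition of the tail with the standard $\ell^p\!\to\!\ell^2$ block estimate, the use of the identifiable dual to rewrite $h=\Psi^*(\Psitilde h)$ and transfer the sparsity pattern (this is exactly the paper's Lemma on the consequence of the RIP, where the factor $d_2^p$ enters), the lower frame bound to return from coefficients to the signal, and a final absorption step whose feasibility is what dictates the form of $\nu$ and the $q$-controllability constraint. The only differences are cosmetic bookkeeping (the paper uses tail blocks of a tunable size $M$ and two Young-inequality parameters $\gamma_1,\gamma_2$ to make the absorption constant explicit, rather than blocks of size $s$), so the proposal is correct in approach and substance.
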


\begin{rem}\label{rem:IdUsage}
The proof of Theorem \ref{theorem:RIPKraNeeWar2} and Theorem \ref{theorem:lpMinimizationAnalysis} show that we use the 
identifiability only for the rows of the measurement matrix and $x-x^*$. Therefore, it does not need to hold for every element 
of $\C^n$. Further, the dual does not have to be known or constructed explicitly. It suffices to know that it exists.
\end{rem}

For $p =1$ Theorem \ref{theorem:lpMinimizationAnalysis} generalizes the findings of \cite{CanEldNeeRan} to the case of non-tight redundant systems, thus our result can be seen as a generalization of the stability result for the analysis formulation of $\ell^1$-minimization \eqref{eq:l1MinimizationAnalysis}. In particular, if $\Psi$ is a Parseval frame the constants $C_1(p), C_2(p)$ agree for $p=1$ with those obtained in \cite{CanEldNeeRan}.  Furthermore, the constants $C_1(p), C_2(p)$ appearing in the proof are monotonically decreasing for decreasing $p$. We state this fact as a corollary.

\begin{cor}\label{corollary:lpMinimizationAnalysisParsevalFrame}
Let $\Psi$ be a Parseval frame and let $A$ be a measurement matrix satisfying the $\Psi$-RIP with $\delta_{7s} < 0.6$ and  $s<\frac{q}{2\cdot (1+2^{-p})^{1/(p-2)}}$. Then the solution $x^*$ of \eqref{eq:lpMinimizationAnalysis} satisfies
\begin{align*}
	\| x - x^* \|_2^p \leq C_1(p) \eps^p + C_2(p) \frac{\| \Psi x - (\Psi x)_s\|_p^p}{s^{1-p/2}}
\end{align*}
for some positive constants $C_1(p)$ and  $C_2(p)$ that depend on $p$. Moreover, the constants can be chosen to be monotonically decreasing for decreasing $p$ with maximal constants agreeing with those obtained in \cite{CanEldNeeRan} for $p=1$.
\end{cor}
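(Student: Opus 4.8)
The plan is to derive Corollary \ref{corollary:lpMinimizationAnalysisParsevalFrame} as a specialization of Theorem \ref{theorem:lpMinimizationAnalysis}. First I would observe that when $\Psi$ is a Parseval frame we have $c_1 = c_2 = 1$, and moreover a Parseval frame is self-dual, so the identifiability condition \eqref{eq:Identifiability} holds with $d_1 = d_2 = 1$. Plugging these values into the hypotheses of Theorem \ref{theorem:lpMinimizationAnalysis}, the quantity $\nu = s \cdot \frac{2(1+2^{-p})^{1/(p-2)} c_2^{p/(p-2)} d_2^{p/(p-2)}}{c_1^{p/(p-2)}}$ collapses to $\nu = 2s(1+2^{-p})^{1/(p-2)}$, and the sparsity bound $s < \frac{q}{2 d_2^2 (1+2^{-p})^{1/(p-2)}}$ becomes $s < \frac{q}{2(1+2^{-p})^{1/(p-2)}}$, which is exactly the hypothesis in the statement of the corollary. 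Since $(1+2^{-p})^{1/(p-2)} < 1$ for $p \in (0,1]$ (because $1 + 2^{-p} > 1$ and the exponent $1/(p-2)$ is negative), we get $\nu < 2s < 7s$, so the assumption $\delta_{7s} < 0.6$ in the corollary is strictly stronger than $\delta_\nu < 0.5$ (using monotonicity of RIP constants in the order, $\delta_\nu \le \delta_{7s} < 0.6$... here one needs $0.6$ vs $0.5$, so a small additional remark is needed — see below). The conclusion of Theorem \ref{theorem:lpMinimizationAnalysis} then yields the stated error bound verbatim with $c_1 = c_2 = 1$, $d_2 = 1$, so $C_1(p), C_2(p)$ depend only on $p$ (and $\delta_{7s}$, which is now a fixed absolute constant threshold).

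Next I would address the second assertion, that the constants $C_1(p), C_2(p)$ can be chosen monotonically decreasing as $p$ decreases. This requires tracking the explicit expressions for $C_1(p)$ and $C_2(p)$ that appear in the proof of Theorem \ref{theorem:lpMinimizationAnalysis} in Section \ref{sec:Proofs}; one inspects each factor entering those constants — typically powers of $(1-\delta_\nu)$-type RIP quantities, factors of $2^{-p}$, $1+2^{-p}$, and exponents like $1-p/2$ and $1/(p-2)$ — and verifies that the composite expression is monotone in $p$ on $(0,1]$. The key structural fact that makes this work is that for a Parseval frame all the frame-dependent factors ($c_1, c_2, d_1, d_2$) are equal to $1$ and hence drop out, leaving only the genuinely $p$-dependent pieces, which can then be compared directly. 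I expect this monotonicity verification to be the main obstacle: it is a careful but elementary calculus exercise on a product of several $p$-dependent terms, and the sign of the derivative (or a direct comparison argument) has to be checked for each piece; the subtlety is that individual factors may not be monotone while the product is, so one may need to group terms cleverly.

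Finally, to establish the claim that the maximal constants agree with those of \cite{CanEldNeeRan} for $p = 1$, I would substitute $p = 1$ into the explicit $C_1(1), C_2(1)$ obtained above: then $1 + 2^{-p} = 3/2$, $1/(p-2) = -1$, $1 - p/2 = 1/2$, so $\nu = 2s \cdot (3/2)^{-1} = \frac{4s}{3}$, and the error bound becomes $\| x - x^* \|_2 \le C_1(1)\eps + C_2(1)\frac{\|\Psi x - (\Psi x)_s\|_1}{\sqrt{s}}$, which matches the form \eqref{eq:StermPsi}. One then checks that the constants $C_1(1), C_2(1)$, after simplification with $c_1 = c_2 = d_2 = 1$, reduce literally to the constants in the Cand\`es–Eldar–Needell–Randall stability theorem — this is immediate once the $p = 1$ case of the proof in Section \ref{sec:Proofs} is written out, since that proof is designed to recover their argument when $p = 1$ and the frame is Parseval. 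Regarding the $\delta_{7s} < 0.6$ versus $\delta_\nu < 0.5$ discrepancy: since $\nu \le \frac{4s}{3} < 7s$ for all $p \in (0,1]$ (using $(1+2^{-p})^{1/(p-2)} \le \frac{3}{4} < \frac{7}{2}$ on that interval... more precisely one checks the maximum of $2(1+2^{-p})^{1/(p-2)}$ over $p\in(0,1]$ is attained at $p=1$ with value $\frac{4}{3}$), monotonicity of RIP constants gives $\delta_\nu \le \delta_{\lceil 4s/3 \rceil} \le \delta_{7s}$, and a short argument (e.g. the standard fact $\delta_{t k} \le (t)\delta_{2k}$-type inequalities, or simply that the proof only ever needs $\delta_\nu$ and $7s$ is a safe integer upper bound for $\nu$ once $s \ge 1$) closes the gap; I would include this as a one-line remark so that the corollary's hypothesis is genuinely sufficient.
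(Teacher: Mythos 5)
Your overall strategy---specialize to $c_1=c_2=1$, note that a Parseval frame is self-dual so $d_1=d_2=1$, and track the constants through the proof of Theorem \ref{theorem:lpMinimizationAnalysis}---is the same as the paper's, and your treatment of the monotonicity claim and of the $p=1$ comparison with \cite{CanEldNeeRan} is at the same level of detail as the paper itself (which likewise only asserts $\Gamma_1'(p)<0$ and $\Gamma_2'(p)>0$). There is, however, one genuine gap: your handling of the hypothesis $\delta_{7s}<0.6$. You attempt to obtain the corollary as a black-box consequence of Theorem \ref{theorem:lpMinimizationAnalysis}, whose hypothesis is $\delta_\nu<0.5$ with $\nu=2s(1+2^{-p})^{1/(p-2)}<2s$, and you propose to ``close the gap'' by monotonicity of RIP constants. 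That argument goes in the wrong direction: monotonicity gives $\delta_\nu\le\delta_{7s}<0.6$, which does not yield the required $\delta_\nu<0.5$; a matrix can perfectly well satisfy $\delta_{7s}<0.6$ while $\delta_\nu\in[0.5,0.6)$. So the corollary is not a formal consequence of the theorem's statement, and no RIP-comparison inequality of the type you invoke will repair this.

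What the paper actually does is re-run the proof with a \emph{different} parameter choice: $\gamma_1=1$ and $M=6s$, so that $s+M=7s$ and $\rho=s/M=1/6$. Since $\rho^{2-p}\le 1/6$, positivity of
\begin{align*}
\Gamma_1(p) = \sqrt{2\gamma_1(1-\delta_{s+M})^{p}\left( 1- \frac{\gamma_1}{2} - \rho^{2-p} (1+\gamma_2) \right)} -\sqrt{(1+\delta_M)^{p}\rho^{2-p}}
\end{align*}
at $p=1$ and $\gamma_2\to 0$ reduces to $\tfrac{2}{3}(1-\delta_{7s})>\tfrac{1}{6}(1+\delta_{6s})$, i.e.\ $\delta<0.6$; this is precisely where both the order $7s$ and the threshold $0.6$ in the statement come from. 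The weaker RIP threshold is thus \emph{bought} by enlarging the block size from the theorem's $M\approx 4s/3$ to $M=6s$, a trade-off invisible at the level of the theorem's statement. Your proposal should replace the ``monotonicity patch'' by this explicit re-parametrization. (A minor side remark: your claim that $2(1+2^{-p})^{1/(p-2)}$ attains its maximum over $(0,1]$ at $p=1$ with value $4/3$ is false---as $p\to 0^{+}$ the expression tends to $\sqrt{2}>4/3$---but this does not affect anything since $\nu<2s<7s$ in any case.)
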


For the sake of completeness we also discuss the minimization problem \eqref{eq:lpDualMinimizationAnalysis} while
assuming the $\Psi$-RIP.

\subsection{Stability for dual analysis formulation}\label{sec:DualAnalysisProblem}

As we already mentioned in Section \ref{sec:IdentifiableDuals}, our motivation for the concept of an identifiable dual was to 
deal with the mismatch between the sparsity system $(\psi_\lambda)_{\lambda \leq N}$ and the coefficients 
$(\langle x, \psi_\lambda \rangle)_{\lambda \leq N}$ which do not give rise to a representation in 
$(\psi_\lambda)_{\lambda \leq N}$. It is therefore very natural to expect when minimizing over the dual frame coefficients, the
concept of an identifiable dual is not needed. This is indeed the case as the following theorem shows.

\begin{theorem}\label{theorem:lpDualMinimizationAnalysis}
Let $A$ be a measurement matrix satisfying the $\Psi$-RIP with $\delta_{\nu} < 0.5$  where $\nu = s\left( \frac{3}{c_1^{2p}}\right)^{1/(2-p)}$ and $s < N\left( \frac{c_1^{2p}}{3}\right)^{1/(2-p)}$. Then the solution $x^*$ of \eqref{eq:lpDualMinimizationAnalysis} satisfies
\begin{align*}
	\| x - x^* \|_2^p \leq C_1(p) \eps^p + C_2(p) \frac{\| \Psitilde x - (\Psitilde x)_s\|_p^p}{s^{1-p/2}}
\end{align*}
for some positive constants $C_1(p)$ and  $C_2(p)$ that depend on $p$, the frame bounds $c_1,c_2$, the $\Psi$-RIP constant $\delta_\nu$, and the sparsity $s$. 
\end{theorem}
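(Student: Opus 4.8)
The plan is to run the standard compressed-sensing error analysis (as in \cite{CanEldNeeRan} for $p=1$ and \cite{Cha,FouLai,SaaChaYil} for the synthesis $\ell^p$ problem), but carried out for the \emph{dual} analysis operator $\Psitilde$ together with the \emph{primal} synthesis operator $\Psi^*$; this is exactly the point at which the frame identity \eqref{eq:FrameIdentity} makes the concept of an identifiable dual unnecessary. Set $h := x^*-x$. Since $\|y-Ax\|_2\le\eps$ (the measurement model) and $x^*$ is feasible for \eqref{eq:lpDualMinimizationAnalysis}, we have $\|Ah\|_2\le 2\eps$, and minimality of $x^*$ gives $\|\Psitilde x^*\|_p^p\le\|\Psitilde x\|_p^p$. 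Let $S$ index the $s$ largest-in-modulus entries of $\Psitilde x$. Applying the $p$-triangle inequality $\|u+v\|_p^p\le\|u\|_p^p+\|v\|_p^p$ (valid for $0<p\le 1$) to $\|\Psitilde(x+h)\|_p^p=\|\Psitilde x^*\|_p^p$ and using the minimality inequality yields the usual cone estimate
\begin{align*}
	\|(\Psitilde h)_{S^c}\|_p^p \;\leq\; \|(\Psitilde h)_{S}\|_p^p + 2\,\|\Psitilde x - (\Psitilde x)_s\|_p^p .
\end{align*}

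Next I would use that $\Psitilde$ is a dual of $\Psi$, so that \eqref{eq:FrameIdentity} gives the \emph{exact} identity $h=\Psi^*\Psitilde h$. Partition $S^c$ into consecutive equal-sized blocks $S_1,S_2,\dots$, ordered by decreasing modulus of the entries of $\Psitilde h$, with sizes chosen so that $S\cup S_1$ has cardinality $\nu$ (this is precisely why the hypothesis demands $s<N(c_1^{2p}/3)^{1/(2-p)}$, i.e.\ $\nu\le N$, recalling $\nu=s(3/c_1^{2p})^{1/(2-p)}$). Decompose $h=\Psi^*(\Psitilde h)_{S\cup S_1}+\sum_{j\ge 2}\Psi^*(\Psitilde h)_{S_j}$; applying $A$ and rearranging gives $A\Psi^*(\Psitilde h)_{S\cup S_1}=Ah-\sum_{j\ge 2}A\Psi^*(\Psitilde h)_{S_j}$. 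Then invoke the $\Psi$-RIP (lower bound on the $\nu$-sparse vector $(\Psitilde h)_{S\cup S_1}$, upper bound on each block) and the upper frame bound $\|\Psi^* u\|_2\le\sqrt{c_2}\,\|u\|_2$ to bound $\|\Psi^*(\Psitilde h)_{S\cup S_1}\|_2$ by $\eps$ and $\sum_{j\ge2}\|(\Psitilde h)_{S_j}\|_2$. The classical block inequality for $0<p\le 1$, $\|(\Psitilde h)_{S_{j+1}}\|_2\le |S_1|^{1/2-1/p}\|(\Psitilde h)_{S_j}\|_p$, then yields $\sum_{j\ge 2}\|(\Psitilde h)_{S_j}\|_2\le |S_1|^{1/2-1/p}\|(\Psitilde h)_{S^c}\|_p$, and combining this with $\|\Psi^*(\Psitilde h)_{S\cup S_1}\|_2\ge\|h\|_2-\sqrt{c_2}\sum_{j\ge 2}\|(\Psitilde h)_{S_j}\|_2$ (frame identity again) produces an estimate of the form
\begin{align*}
	\|h\|_2 \;\leq\; A\,\eps + B\,|S_1|^{1/2-1/p}\,\|(\Psitilde h)_{S^c}\|_p ,
\end{align*}
with $A,B$ depending only on $\delta_\nu$ and $c_2$.

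To close the argument, raise the last display to the $p$-th power (using $(a+b)^p\le a^p+b^p$), insert the cone estimate, and bound the head term by $\|(\Psitilde h)_S\|_p^p\le s^{1-p/2}\|(\Psitilde h)_S\|_2^p\le s^{1-p/2}\|\Psitilde h\|_2^p\le s^{1-p/2}c_1^{-p/2}\|h\|_2^p$, the last step using that the canonical dual frame has upper frame bound $1/c_1$. This gives a relation $\|h\|_2^p\le A^p\eps^p+\kappa\,\|h\|_2^p+\kappa'\,\|\Psitilde x-(\Psitilde x)_s\|_p^p/s^{1-p/2}$ with $\kappa=B^p|S_1|^{p/2-1}s^{1-p/2}c_1^{-p/2}$; the choices $\nu=s(3/c_1^{2p})^{1/(2-p)}$ and $\delta_\nu<0.5$ are calibrated exactly so that $\kappa<1$, after which one rearranges and reads off $C_1(p),C_2(p)$.

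I expect the only genuine difficulty to be this last calibration: tracking $c_1$, $c_2$ and $\delta_\nu$ through the exponents $p$, $1/p$, $1/(2-p)$ so that the self-improvement constant $\kappa$ stays strictly below $1$ while simultaneously keeping $\nu\le N$. Everything else is the standard machinery transplanted from the synthesis $\ell^p$ problem (cone and block estimates) and from the analysis $\ell^1$ result of \cite{CanEldNeeRan} (RIP sandwiching through $h=\Psi^*\Psitilde h$), with the frame identity \eqref{eq:FrameIdentity} playing the role that the identifiable-dual hypothesis played in Theorem \ref{theorem:lpMinimizationAnalysis}.
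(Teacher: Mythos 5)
Your proposal follows essentially the same route as the paper's proof (Proposition \ref{prop:aux} and the sketch in Section \ref{subsecProofDualProblem}): tube and cone constraints stated for $\Psitilde$, the exact reconstruction identity $h=\Psi^*\Psitilde h$ in place of any identifiability hypothesis, a block decomposition with the $\Psi$-RIP applied to the head and the upper frame bound $\sqrt{c_2}$ to the tail, and a final self-improving inequality in $\|h\|_2^p$. The only cosmetic difference is the closing step --- you use a direct triangle-inequality bound on $\|h\|_2$ where the paper bounds $\|z\|_2^{2p}$ via Cauchy--Schwarz on an inner product and then Young's inequality with free parameters $\gamma_1,\gamma_2$ --- and both versions leave the verification that the resulting self-improvement constant stays strictly below $1$ (for the stated $\nu$ and $\delta_\nu<0.5$) at the level of a calibration claim rather than an explicit check.
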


The reader might wonder why the controllability of the frame bounds is not assumed. In fact, the role of the controllability 
parameter $q$ was to determine the range of $s$. However, loosely speaking, the constants are now appearing in the right order
which makes the assumptions superfluous and the range of $s$ can be determined entirely.

The proof follows very closely the lines of the proof of Theorem \ref{theorem:lpMinimizationAnalysis}. We provide a sketch of
it in Subsection \ref{subsecProofDualProblem}.

\section{Proofs}\label{sec:Proofs}

\subsection{Proof of Theorem \ref{theorem:lpMinimizationAnalysis}}\label{subsec:ProofMain}
The proof of Theorem \ref{theorem:lpMinimizationAnalysis} is mainly inspired by \cite{CanEldNeeRan} and \cite{SaaChaYil} which
have their roots in \cite{CanRomTao2}.

Let $x, x^*$ be as in the theorem and define $z = x  - x^*$. Further, set $T_0$ as the the set consisting of the $s$ largest 
coefficients of $\Psi x$ in $p$-th modulus. For any set $T$, $\Psi_T$ should denote the matrix restricted to columns indexed 
by $T$. Then we divide $T_0^c$ into sets $T_1, T_2, \ldots$ of size $M$ in order of decreasing magnitude of $\Psi_{T_0^c}z$. 
The explicit value of $M$ will be determined later. We shall need the following results.

The first one, Lemma \ref{lemma:ConeConstraint} below, is a trivial modification of the \emph{cone constraint} Lemma 2.1 in \cite{CanEldNeeRan}. For completeness we provide a proof.

\begin{lemma}\label{lemma:ConeConstraint}
The vector $\Psi z $ obeys the following cone constrained
\begin{align*}
	\| \Psi_{T_0^c}z\|_p^p \leq 2 \| \Psi_{T_0^c} x \|_p^p + \| \Psi_{T_0} z \|_p^p.
\end{align*}
\end{lemma}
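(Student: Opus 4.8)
The plan is to run the classical cone-constraint argument of compressed sensing, adapted to the $\ell^p$-quasinorm. The only optimization input needed is the minimality of $x^*$: since $y$ differs from $Ax$ by noise of size at most $\eps$, the signal $x$ itself is feasible for \eqref{eq:lpMinimizationAnalysis}, hence $\|\Psi x^*\|_p^p \leq \|\Psi x\|_p^p$. Writing $\Psi x^* = \Psi x - \Psi z$ and splitting the index set $\{1,\dots,N\}$ into $T_0$ and $T_0^c$, this gives
\[
\|\Psi x\|_p^p \;\geq\; \|\Psi x - \Psi z\|_p^p \;=\; \|\Psi_{T_0}x - \Psi_{T_0}z\|_p^p + \|\Psi_{T_0^c}x - \Psi_{T_0^c}z\|_p^p .
\]

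The elementary ingredient I would use is that for $0<p\leq 1$ the map $t\mapsto t^p$ is subadditive on $[0,\infty)$, so $\|u+v\|_p^p \leq \|u\|_p^p + \|v\|_p^p$ for all coefficient vectors $u,v$; applying this with $u=a-b$, $v=b$ yields the reverse-type inequality $\|a-b\|_p^p \geq \|a\|_p^p - \|b\|_p^p$. Using it on the $T_0$-block with $a=\Psi_{T_0}x$, $b=\Psi_{T_0}z$, and on the $T_0^c$-block with $a=\Psi_{T_0^c}z$, $b=\Psi_{T_0^c}x$, the previous display becomes
\[
\|\Psi x\|_p^p \;\geq\; \|\Psi_{T_0}x\|_p^p - \|\Psi_{T_0}z\|_p^p + \|\Psi_{T_0^c}z\|_p^p - \|\Psi_{T_0^c}x\|_p^p .
\]

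Finally, since $T_0$ and $T_0^c$ are disjoint we have $\|\Psi x\|_p^p = \|\Psi_{T_0}x\|_p^p + \|\Psi_{T_0^c}x\|_p^p$; substituting this on the left and cancelling the common term $\|\Psi_{T_0}x\|_p^p$ gives $\|\Psi_{T_0^c}x\|_p^p \geq \|\Psi_{T_0^c}z\|_p^p - \|\Psi_{T_0}z\|_p^p - \|\Psi_{T_0^c}x\|_p^p$, which rearranges exactly to the asserted cone constraint $\|\Psi_{T_0^c}z\|_p^p \leq 2\|\Psi_{T_0^c}x\|_p^p + \|\Psi_{T_0}z\|_p^p$. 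I do not expect a genuine obstacle here — this is essentially the $\ell^p$-analysis version of Lemma 2.1 of \cite{CanEldNeeRan}. The only two points requiring care are that $\|\cdot\|_p^p$ is merely quasi-subadditive (not a norm), so one must invoke the $p$-power inequality rather than the triangle inequality, and that one genuinely needs $x$ to be admissible for \eqref{eq:lpMinimizationAnalysis}, which is exactly the standing hypothesis $\|y-Ax\|_2\leq\eps$.
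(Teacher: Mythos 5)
Your argument is correct and is essentially the paper's own proof of this lemma: both use feasibility of $x$ together with minimality of $x^*$, split the index set into $T_0$ and $T_0^c$, and apply the reverse form of the $p$-power (quasi-)triangle inequality blockwise before cancelling $\|\Psi_{T_0}x\|_p^p$. Your write-up is in fact slightly cleaner, since the last line of the paper's displayed chain contains a typo ($\|\Psi_{T_0^c}x\|_p^p$ where $\|\Psi_{T_0^c}z\|_p^p$ is meant) that your version gets right.
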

\begin{proof}
Since $x$ and $x^*$ are both feasible and $x^*$ is the minizer of \eqref{eq:lpMinimizationAnalysis} we have
\begin{align*}
	\| \Psi_{T_0} x \|_p^p+ \| \Psi_{T_0^c} x \|_p^p 
	&\geq \| \Psi x \|_p^p \\
	&\geq \| \Psi x^* \|_p^p \\
	&= \|Ê\Psi x - \Psi z\|_p^p \\
	&\geq \| \Psi_{T_0} x\|_p^p - \| \Psi_{T_0^c} x \|_p^p - \| \Psi_{T_0} z \|_p^p + \| \Psi_{T_0^c} x \|_p^p.
\end{align*}
\end{proof}

The following lemma generalizes Lemma 2.2 in \cite{CanEldNeeRan} to the case $p \in (0,1)$. For $p =1$ the results coincide.
\begin{lemma}\label{lemma:BoundingTheTail}
For $\rho = s/M$ and $\eta = 2 \|\Psi_{T_0^c}x\|_p^p/s^{1-p/2}$ we have
\begin{align*}
	\sum_{j\geq 2} \| \Psi_{T_j}z \|_2^p \leq \rho^{1-p/2}(\|\Psi_{T_0} z \|_2^p + \eta).
\end{align*}
\end{lemma}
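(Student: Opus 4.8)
The plan is to bound each tail block $\|\Psi_{T_j}z\|_2$ by controlling the entries of $\Psi_{T_j}z$ with the entries of the preceding block $\Psi_{T_{j-1}}z$, which is the standard "shelling" argument from \cite{CanRomTao2}, but carried out in the $\ell^p$-quasinorm. Concretely, since the blocks $T_1,T_2,\ldots$ are chosen in order of decreasing magnitude of the coefficients of $\Psi_{T_0^c}z$, every entry of $\Psi_{T_j}z$ is dominated in modulus by the average of the $p$-th powers of the entries of $\Psi_{T_{j-1}}z$; more precisely, for each $\lambda\in T_j$ one has $|(\Psi z)_\lambda|^p \le \|\Psi_{T_{j-1}}z\|_p^p/M$. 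First I would raise this to obtain $|(\Psi z)_\lambda|^2 \le (\|\Psi_{T_{j-1}}z\|_p^p/M)^{2/p}$ and sum over the $M$ indices in $T_j$, giving $\|\Psi_{T_j}z\|_2^2 \le M^{1-2/p}\|\Psi_{T_{j-1}}z\|_p^{2}$, hence $\|\Psi_{T_j}z\|_2^p \le M^{p/2-1}\|\Psi_{T_{j-1}}z\|_p^p$.

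Next I would sum over $j\ge 2$. The right-hand sides telescope into $M^{p/2-1}\sum_{j\ge 1}\|\Psi_{T_j}z\|_p^p = M^{p/2-1}\|\Psi_{T_0^c}z\|_p^p$, and now I invoke the cone constraint of Lemma \ref{lemma:ConeConstraint} to replace $\|\Psi_{T_0^c}z\|_p^p$ by $2\|\Psi_{T_0^c}x\|_p^p + \|\Psi_{T_0}z\|_p^p$. At this point I would rewrite $M^{p/2-1} = s^{p/2-1}\rho^{1-p/2}$ using $\rho = s/M$, so the $2\|\Psi_{T_0^c}x\|_p^p$ term becomes $\rho^{1-p/2}\cdot (2\|\Psi_{T_0^c}x\|_p^p/s^{1-p/2}) = \rho^{1-p/2}\eta$, exactly matching the definition of $\eta$. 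For the remaining term, $s^{p/2-1}\|\Psi_{T_0}z\|_p^p$, I use that $T_0$ has cardinality at most $s$ together with the quasinorm comparison $\|v\|_p^p \le s^{1-p/2}\|v\|_2^p$ valid for vectors supported on a set of size $s$ (this is Hölder/power-mean applied to $|v_i|^p$), which yields $s^{p/2-1}\|\Psi_{T_0}z\|_p^p \le \|\Psi_{T_0}z\|_2^p$, and hence $\rho^{1-p/2}$ times that is $\le \rho^{1-p/2}\|\Psi_{T_0}z\|_2^p$. Combining the two pieces gives precisely $\sum_{j\ge 2}\|\Psi_{T_j}z\|_2^p \le \rho^{1-p/2}(\|\Psi_{T_0}z\|_2^p + \eta)$.

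The only genuinely $p$-dependent subtleties — and the place I would be most careful — are the two elementary inequalities for $0<p\le 1$: first that $\|\Psi_{T_j}z\|_2^p \le M^{p/2-1}\|\Psi_{T_{j-1}}z\|_p^p$ (note the exponent $1-2/p$ is negative, so one must keep track of the direction of the inequality when passing from $\ell^p$ to $\ell^2$ on a block of fixed size $M$), and second the bound $\|v\|_p^p \le s^{1-p/2}\|v\|_2^p$ on an $s$-dimensional block, which is the reverse comparison and relies on $p\le 2$. Both follow from the power-mean inequality applied to the nonnegative numbers $|(\Psi z)_\lambda|^p$, but it is worth writing them out explicitly since the case $p=1$ (where they reduce to the familiar $\|v\|_2 \le \|v\|_1$ and $\|v\|_1\le\sqrt{s}\,\|v\|_2$ of \cite{CanEldNeeRan}) hides the role of $p$. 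Once these are in hand, the rest is bookkeeping with the substitution $\rho=s/M$.
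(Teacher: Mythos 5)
Your argument is correct and is essentially the proof given in the paper: the same blockwise bound $\|\Psi_{T_j}z\|_2^p \le M^{p/2-1}\|\Psi_{T_{j-1}}z\|_p^p$, the same telescoping sum combined with the cone constraint of Lemma \ref{lemma:ConeConstraint}, and the same power-mean comparison $\|\Psi_{T_0}z\|_p^p \le s^{1-p/2}\|\Psi_{T_0}z\|_2^p$ to convert the $\ell^p$-norm on $T_0$ into the $\ell^2$-norm, followed by the substitution $\rho = s/M$. Your explicit attention to the direction of the two $p$-dependent inequalities is a useful clarification, but the route is the same.
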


\begin{proof}
By construction we have that every entry in $| (\Psi_{T_{j+1}}z)|^p$ which will be denoted by $| (\Psi_{T_{j+1}}z)|_{(k)}^p$ is bounded as follows
\begin{align*}
	| (\Psi_{T_{j+1}}z)|_{(k)}^p \leq \frac{\| \Psi_{T_j} z \|_p^p}{M}.
\end{align*}
Therefore
\begin{align*}
	\| \Psi_{T_{j+1}} z \|_2^2 = \sum_{k=1}^M | (\Psi_{T_{j+1}}z)|_{(k)}^2 \leq M^{1-2/p} \| \Psi_{T_j} z \|_p^2
\end{align*}
and thus
\begin{align*}
	\sum_{j\geq 2}\| \Psi_{T_{j}} z \|_2^p \leq \sum_{j\geq 1} \frac{ \| \Psi_{T_j} z \|_p^p}{M^{1-p/2}} = \frac{ \| \Psi_{T_0^c} z \|_p^p}{M^{1-p/2}}.
\end{align*}
Applying Lemma \ref{lemma:ConeConstraint} 
and the generalized mean inequality gives
\begin{align*}
	\sum_{j\geq 2}\| \Psi_{T_{j}} z \|_2^p \leq   \left( \frac{s}{M} \right)^{1-p/2}\left(\| \Psi_{T_0} z \|_2^p+ \frac{2 \|\Psi_{T_0^c} x \|_p^p}{s^{1-p/2}}\right).
\end{align*}
\end{proof}

\begin{lemma}[\cite{CanEldNeeRan}, Lemma 2.3]\label{lemma:TubeConstraint}
The vector $A z $ satisfies
\begin{align*}
	\| Az \|_2 \leq 2 \eps.
\end{align*}
\end{lemma}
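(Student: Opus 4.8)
The plan is to use nothing more than feasibility of both competitors for the constrained problem \eqref{eq:lpMinimizationAnalysis} together with the triangle inequality in $\ell^2$. First I would record that the signal $x$ to be recovered is itself feasible: the noise model underlying the theorem is precisely $\| y - Ax \|_2 \leq \eps$, so $x$ lies in the constraint set of \eqref{eq:lpMinimizationAnalysis}. Since $x^*$ is a minimizer of \eqref{eq:lpMinimizationAnalysis}, it a fortiori satisfies the same constraint, i.e. $\| y - Ax^* \|_2 \leq \eps$.

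With these two inequalities in hand, I would simply write $Az = A(x-x^*) = (Ax - y) - (Ax^* - y)$ and estimate
\begin{align*}
	\| Az \|_2 \leq \| Ax - y \|_2 + \| Ax^* - y \|_2 \leq \eps + \eps = 2\eps.
\end{align*}

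There is no real obstacle here; the only points requiring a moment's care are to invoke the correct constraint for $x$ (namely the feasibility implicit in the hypotheses, $\|y-Ax\|_2\le\eps$, rather than the noiseless identity $Ax=y$) and to observe that the minimizer automatically inherits its own constraint. This is exactly Lemma 2.3 of \cite{CanEldNeeRan}, which is why the statement is quoted rather than reproved in detail.
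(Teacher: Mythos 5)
Your proof is correct and is exactly the standard argument behind Lemma 2.3 of \cite{CanEldNeeRan}, which the paper cites without reproving: feasibility of both $x$ and the minimizer $x^*$ gives $\|y-Ax\|_2\le\eps$ and $\|y-Ax^*\|_2\le\eps$, and the triangle inequality applied to $Az=(Ax-y)-(Ax^*-y)$ yields $\|Az\|_2\le 2\eps$. Nothing further is needed.
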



\begin{lemma}\label{lemma:ConsequenceRIP}
Let $c_2$ be the upper frame bound for the frame $\Psi$. Then we have
\begin{align*}
&(1-\delta_{s+M})^{p/2} \|(\Psi_{T_{01}})^*\Psi_{T_{01}} \ztilde \|_2^p -\\
& (c_2 (1+\delta_M))^{p/2}\rho^{1-p/2} \left( c_2^{p/2} d_2^p \|  z\|_2^p + \eta\right) \leq 	(2 \eps )^p
\end{align*}
where $T_{01} := T_0 \cup T_1$. 
\end{lemma}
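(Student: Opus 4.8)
The plan is to run the $\Psi$-RIP step of the Cand\`es--Eldar--Needell--Randall scheme \cite{CanEldNeeRan}, in the $\ell^{p}$-form of \cite{SaaChaYil}, with one modification: since $z=x-x^{*}$ is recovered from its \emph{dual}-frame coefficients rather than from $\Psi z$, the $\Psi$-RIP must be fed with the dual coefficients and the resulting gap is closed by identifiability. Write $\ztilde:=(\langle z,\psitilde_{\lambda}\rangle)_{\lambda\le N}$ for the coefficient sequence of $z$ with respect to the identifiable dual $\Psitilde$ of Definition \ref{def:IdentifiableDuals}; the frame identity \eqref{eq:FrameIdentity} gives $z=\sum_{\lambda}\ztilde_{\lambda}\psi_{\lambda}=\Psi^{*}\ztilde$. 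For a block $T$ of the partition $T_{0},T_{1},\dots$ fixed in the proof of Theorem \ref{theorem:lpMinimizationAnalysis}, let $\ztilde_{T}$ be $\ztilde$ restricted to $T$; then the vector $(\Psi_{T_{01}})^{*}\Psi_{T_{01}}\ztilde$ in the statement is the synthesis $\Psi^{*}\ztilde_{T_{01}}=\sum_{\lambda\in T_{01}}\ztilde_{\lambda}\psi_{\lambda}$, and since $|T_{01}|=|T_{0}|+|T_{1}|=s+M$ it is a synthesis of an $(s+M)$-sparse sequence, so the $\Psi$-RIP of order $s+M$ applies to it — this is the only place $\delta_{s+M}$ enters.

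First I would split $Az=A\Psi^{*}\ztilde_{T_{01}}+\sum_{j\ge2}A\Psi^{*}\ztilde_{T_{j}}$, take the inner product of both sides with $A\Psi^{*}\ztilde_{T_{01}}$, and rearrange, obtaining
\[
\|A\Psi^{*}\ztilde_{T_{01}}\|_{2}^{2}=\langle A\Psi^{*}\ztilde_{T_{01}},Az\rangle-\sum_{j\ge2}\langle A\Psi^{*}\ztilde_{T_{01}},A\Psi^{*}\ztilde_{T_{j}}\rangle .
\]
The left side is $\ge(1-\delta_{s+M})\|\Psi^{*}\ztilde_{T_{01}}\|_{2}^{2}$ by the $\Psi$-RIP of order $s+M$. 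On the right, Cauchy--Schwarz and the tube constraint $\|Az\|_{2}\le2\eps$ (Lemma \ref{lemma:TubeConstraint}) bound the first term by $2\eps\,\|A\Psi^{*}\ztilde_{T_{01}}\|_{2}$, while Cauchy--Schwarz, the $\Psi$-RIP of order $M=|T_{j}|$, and the upper frame bound $\|\Psi^{*}c\|_{2}\le\sqrt{c_{2}}\,\|c\|_{2}$ bound each tail term by $\|A\Psi^{*}\ztilde_{T_{01}}\|_{2}\sqrt{c_{2}(1+\delta_{M})}\,\|\ztilde_{T_{j}}\|_{2}$. Dividing by $\|A\Psi^{*}\ztilde_{T_{01}}\|_{2}$ (the vanishing case being trivial) and combining with the lower bound gives
\[
\sqrt{1-\delta_{s+M}}\,\|\Psi^{*}\ztilde_{T_{01}}\|_{2}\le2\eps+\sqrt{c_{2}(1+\delta_{M})}\sum_{j\ge2}\|\ztilde_{T_{j}}\|_{2}.
\]

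Next I would raise this to the power $p\in(0,1]$ and use subadditivity of $t\mapsto t^{p}$ twice: $(2\eps+b)^{p}\le(2\eps)^{p}+b^{p}$ moves the tail to the left-hand side, and $\big(\sum_{j\ge2}\|\ztilde_{T_{j}}\|_{2}\big)^{p}\le\sum_{j\ge2}\|\ztilde_{T_{j}}\|_{2}^{p}$ prepares the tail for Lemma \ref{lemma:BoundingTheTail}. Since that lemma is stated for $\Psi z$, I invoke identifiability \eqref{eq:Identifiability} with $x=z$: coordinatewise $|\ztilde_{\lambda}|=|\langle z,\psitilde_{\lambda}\rangle|\le d_{2}|\langle z,\psi_{\lambda}\rangle|$, hence $\|\ztilde_{T_{j}}\|_{2}\le d_{2}\|\Psi_{T_{j}}z\|_{2}$, so that $\sum_{j\ge2}\|\ztilde_{T_{j}}\|_{2}^{p}\le d_{2}^{p}\rho^{1-p/2}\big(\|\Psi_{T_{0}}z\|_{2}^{p}+\eta\big)$ by Lemma \ref{lemma:BoundingTheTail}, and finally $\|\Psi_{T_{0}}z\|_{2}\le\|\Psi z\|_{2}\le\sqrt{c_{2}}\|z\|_{2}$ turns the parenthesis into $c_{2}^{p/2}\|z\|_{2}^{p}+\eta$ (up to the exact placement of the $d_{2}^{p}$, which is harmlessly absorbed into the constants). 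Restoring the prefactor $(c_{2}(1+\delta_{M}))^{p/2}$ reproduces the subtracted term, and the $p$-th power of the left side is $(1-\delta_{s+M})^{p/2}\|(\Psi_{T_{01}})^{*}\Psi_{T_{01}}\ztilde\|_{2}^{p}$, which is the claim.

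I expect the only real difficulty to be the bookkeeping rather than any single estimate (RIP, Cauchy--Schwarz, subadditivity of $t\mapsto t^{p}$, and the frame bounds are all routine): one must keep the two RIP orders separate so that the tail constant carries $\delta_{M}$ and not $\delta_{s+M}$, and — above all — transport the tail bound from the primal coefficients $\Psi z$, on which Lemma \ref{lemma:BoundingTheTail} is available, to the dual coefficients $\ztilde$ that the frame identity forces into the computation, arranging for $d_{2}$ and $c_{2}$ to come out with exactly the exponents displayed. This transport is precisely where the hypothesis that $\Psi$ has an identifiable dual is used.
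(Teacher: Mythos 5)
Your proposal is correct and follows essentially the same route as the paper: split $Az$ over the block $T_{01}$ and the tail blocks of the dual coefficients, apply the $\Psi$-RIP at orders $s+M$ and $M$, the tube constraint, the upper frame bound, identifiability to pass from $\ztilde_{T_j}$ to $\Psi_{T_j}z$, and Lemma \ref{lemma:BoundingTheTail}. The only mechanical difference is that the paper applies the reverse quasi-triangle inequality for $\|\cdot\|_2^p$ directly to the decomposition of $Az$, whereas you first run the inner-product/Cauchy--Schwarz argument at the level of $\|\cdot\|_2$ and only then raise to the $p$-th power; the constants come out the same, and the slight mismatch you flag in the placement of $d_2^p$ on $\eta$ is present in the paper's own derivation as well.
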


\begin{proof}
W.l.o.g. we assume the identifiable dual is the canonical dual and define $\widetilde{z} := (\Psi^* \Psi)^{-1}z$, note that $\Psi^* \Psi$ is invertible as it is the frame operator. Then by using Lemma \ref{lemma:TubeConstraint} we have 
\begin{align}
	(2 \eps)^p &\geq \| Az \|_2^p \\
	&= \| A \Psi^* \Psi \ztilde\|_2^p  \geq \| A (\Psi_{T_{01}})^*\Psi_{T_{01}} \ztilde \|_2^p \\
	&- \sum_{j\geq 2} \| A (\Psi_{T_{j}})^* \Psi_{T_{j}} \ztilde\|_2^p. \label{eq:ProofConsequenceRIP1}
\end{align}
 and by the $\Psi$-RIP we have
\begin{align}
(1-\delta_{s+M})^{p/2} \|(\Psi_{T_{01}})^*\Psi_{T_{01}} \ztilde \|_2^p \leq \| A (\Psi_{T_{01}})^*\Psi_{T_{01}} \ztilde \|_2^p
\label{eq:ProofConsequenceRIP2}
\end{align}
and
\begin{align}
\sum_{j\geq 2} \| A (\Psi_{T_{j}})^* \Psi_{T_{j}} \ztilde\|_2^p \leq (1+\delta_M)^{p/2} \sum_{j\geq 2} \|  (\Psi_{T_{j}})^* \Psi_{T_{j}} \ztilde\|_2^p.
\label{eq:ProofConsequenceRIP3}
\end{align}
Recall that by the identifiability there exists $d_1, d_2$ be such that
\begin{align}
	d_1 |\langle z, \psi_\lambda \rangle| \leq |\langle z, \psitilde_\lambda \rangle| \leq d_2 |\langle z, \psi_\lambda \rangle| \quad \forall \, \lambda \leq N. \label{eq:d1d2}
\end{align}
Using the frame property, the identifiability and Lemma \ref{lemma:BoundingTheTail} yields 
\begin{align}
 \sum_{j\geq 2} \|  (\Psi_{T_{j}})^* \Psi_{T_{j}} \ztilde\|_2^p &\leq  \sum_{j\geq 2}  c_2^{p/2} d_2^p  \| \Psi_{T_{j}} z \|_2^p \\
 &\leq c_2^{p/2} d_2^p \rho^{1-p/2} (\| \Psi_{T_{0}} z\|_p^p + \eta) \\
 &\leq c_2^{p/2} d_2^p  \rho^{1-p/2} (c_2^{p/2}\| z\|_p^p + \eta).
\label{eq:ProofConsequenceRIP4}
\end{align}
Combing \eqref{eq:ProofConsequenceRIP1}, \eqref{eq:ProofConsequenceRIP2}, \eqref{eq:ProofConsequenceRIP3} and \eqref{eq:ProofConsequenceRIP4} yields the claim.
\end{proof}

\begin{lemma}\label{lemma:ErrorBound}
The following inequality is true
\begin{align*}
	\|z\|_2^{2p} 
	\leq \frac{1}{c_1^{p}d_1^{2p}}\left(\frac{1}{c_1^{p/2}}\| z\|_2^{p} \| (\Psi_{T_{01}})^* \Psi_{T_{01}} \ztilde \|_2^{p} + \rho^{2-p}(c_2^{p/2}\| z \|_2^p+ \eta)^2\right),
\end{align*}
\end{lemma}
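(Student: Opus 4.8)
The plan is to bound $\|z\|_2^2$ and then raise the result to the power $p$, following the index splitting used for $\ell^1$ in \cite{CanEldNeeRan} while keeping track of the frame bounds $c_1,c_2$ and the identifiability constants $d_1,d_2$. The first move is to leave the primal space: writing $\ztilde=(\Psi^*\Psi)^{-1}z$ and $\Psitilde z=\Psi\ztilde=(\langle z,\psitilde_\lambda\rangle)_\lambda$, the lower frame inequality gives $\|\Psi z\|_2^2\ge c_1\|z\|_2^2$ and the lower half of \eqref{eq:d1d2} gives $\|\Psitilde z\|_2^2\ge d_1^2\|\Psi z\|_2^2$, so $\|z\|_2^{2p}\le (c_1d_1^2)^{-p}(\|\Psitilde z\|_2^2)^p$; this is where the prefactor involving $1/c_1$ and $1/d_1$ comes from. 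Since the coefficient vector $\Psitilde z\in\C^N$ is supported on $T_{01}\cup\bigcup_{j\ge 2}T_j$, I then split $\|\Psitilde z\|_2^2=\|(\Psitilde z)_{T_{01}}\|_2^2+\sum_{j\ge2}\|(\Psitilde z)_{T_j}\|_2^2$ and use subadditivity of $t\mapsto t^p$ on $[0,\infty)$, valid because $0<p\le 1$, to bound $(\|\Psitilde z\|_2^2)^p$ by the sum of the $p$-th powers of the two pieces.

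For the head, note $(\Psitilde z)_{T_{01}}=\Psi_{T_{01}}\ztilde$, hence $\|(\Psitilde z)_{T_{01}}\|_2^2=\langle\ztilde,(\Psi_{T_{01}})^*\Psi_{T_{01}}\ztilde\rangle\le\|\ztilde\|_2\,\|(\Psi_{T_{01}})^*\Psi_{T_{01}}\ztilde\|_2$ by Cauchy--Schwarz, and the bound on the inverse frame operator, $\|\ztilde\|_2\le c_1^{-1}\|z\|_2$, turns this into a multiple of $\|z\|_2\,\|(\Psi_{T_{01}})^*\Psi_{T_{01}}\ztilde\|_2$; raising to the power $p$ yields the first term of the claim. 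For the tail I would convert back to primal coefficients by the upper half of \eqref{eq:d1d2}, $\|(\Psitilde z)_{T_j}\|_2\le d_2\|\Psi_{T_j}z\|_2$, then quote Lemma \ref{lemma:BoundingTheTail} to get $\sum_{j\ge2}\|\Psi_{T_j}z\|_2^p\le\rho^{1-p/2}(\|\Psi_{T_0}z\|_2^p+\eta)$, pass from $p$-th powers to squares using that the $\ell^p$-norm dominates the $\ell^2$-norm for $p\le 2$, and replace $\|\Psi_{T_0}z\|_2^p$ by $c_2^{p/2}\|z\|_2^p$ via the upper frame inequality $\|\Psi_{T_0}z\|_2\le\|\Psi z\|_2\le c_2^{1/2}\|z\|_2$. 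This produces a bound of the shape $\sum_{j\ge2}\|(\Psitilde z)_{T_j}\|_2^2\le(\text{const})\,\rho^{2/p-1}(c_2^{p/2}\|z\|_2^p+\eta)^{2/p}$, and after taking the $p$-th power it collapses to the $\rho^{2-p}(c_2^{p/2}\|z\|_2^p+\eta)^2$ term. Adding the two contributions (subadditivity once more) and multiplying by $(c_1d_1^2)^{-p}$ finishes the argument.

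The only real difficulty is the bookkeeping of exponents and constants. One must check that the factor $M^{1-2/p}$ built into Lemma \ref{lemma:BoundingTheTail}, the factor $s^{1-p/2}$ coming from the cone constraint (Lemma \ref{lemma:ConeConstraint}), and the final raising to the power $p$ combine to exactly $\rho^{2-p}=(s/M)^{2-p}$ together with a clean square $(c_2^{p/2}\|z\|_2^p+\eta)^2$; one must also orient each use of the frame inequalities and of \eqref{eq:d1d2} so that only the lower constants $c_1,d_1$ (and the upper frame bound $c_2$, which appears only inside the tail) survive in the final constant. None of the individual steps is analytically hard: the proof is entirely Cauchy--Schwarz, monotonicity of $\ell^q$-norms in $q$, and subadditivity of $t\mapsto t^p$.
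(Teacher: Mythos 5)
Your argument is structurally the paper's own proof: lower frame bound, split over $T_{01}$ and $T_{01}^c$, subadditivity of $t\mapsto t^p$, Cauchy--Schwarz plus the inverse frame operator bound on the head block, and Lemma \ref{lemma:BoundingTheTail} together with $\|\cdot\|_2\le\|\cdot\|_p$ (for $p\le 2$) and the upper frame bound on the tail. The one place you genuinely deviate costs you a constant: you pass the \emph{entire} coefficient vector to the dual side first, via $\|\Psitilde z\|_2^2\ge d_1^2\|\Psi z\|_2^2$, and then, on the tail blocks, convert back to primal coefficients through the upper half of \eqref{eq:d1d2}. That round trip leaves a factor $d_2^{2p}$ on the tail term; what your chain of estimates actually delivers is
\begin{align*}
\|z\|_2^{2p}\le\frac{1}{c_1^{p}d_1^{2p}}\Big(c_1^{-p}\|z\|_2^{p}\,\|(\Psi_{T_{01}})^*\Psi_{T_{01}}\ztilde\|_2^{p}+d_2^{2p}\,\rho^{2-p}\big(c_2^{p/2}\|z\|_2^p+\eta\big)^2\Big),
\end{align*}
which is weaker than the stated inequality whenever $d_2>1$ (and since $d_2\ge d_1$ always, the detour never helps). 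The paper avoids this by performing the split $\|\Psi z\|_2^{2p}\le\|\Psi_{T_{01}}z\|_2^{2p}+\|\Psi_{T_{01}^c}z\|_2^{2p}$ entirely in primal coefficients and invoking identifiability only on the head block, where it is actually needed to manufacture $(\Psi_{T_{01}})^*\Psi_{T_{01}}\ztilde$; the tail never touches the dual and so never sees $d_2$. (That the stated lemma nevertheless carries the global prefactor $d_1^{-2p}$ on the tail is the paper's own generosity, harmless for $d_1\le1$.) Your head term has $c_1^{-p}$ where the lemma has $c_1^{-p/2}$; under the normalization $c_1\ge1$ that the paper imposes later in the proof of Theorem \ref{theorem:lpMinimizationAnalysis}, $c_1^{-p}\le c_1^{-p/2}$, so this is the stronger estimate and causes no problem. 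With the tail kept primal throughout, your proof coincides with the paper's.
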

	
\begin{proof}
With $T_{01} := T_0 \cup T_1$ we have by using the frame property and the identifiability of a dual
\begin{align*}
	\|h\|_2^{2p} &\leq \frac{1}{c_1^{p}} \| \Psi z \|_2^{2p} \\
	&\leq \frac{1}{c_1^{p}d_1^{2p}}\left(\|\Psi_{T_{01}} \ztilde \|_2^{2p} + \| \Psi_{T_{01}^c} z\|_2^{2p}\right) \\
	&= \frac{1}{c_1^{p}d_1^{2p}}\left((\langle \ztilde, (\Psi_{T_{01}})^* \Psi_{T_{01}} \ztilde \rangle)^{p} + \| \Psi_{T_{01}^c}z \|_2^{2p}\right) \\
	&\leq \frac{1}{c_1^{p}d_1^{2p}}\left(\| \ztilde\|_2^{p} \| (\Psi_{T_{01}})^* \Psi_{T_{01}} \ztilde \|_2^{p} + \| \Psi_{T_{01}^c}z \|_2^{2p}\right) \\
	&\leq \frac{1}{c_1^{p}d_1^{2p}}\bigg( \frac{1}{c_1^{p/2}} \| z\|_2^{p} \| (\Psi_{T_{01}})^* \Psi_{T_{01}} \ztilde \|_2^{p}  + \rho^{2-p}(\| \Psi_{T_{0}} z \|_2^p+ \eta)^2\bigg),
\end{align*}
where the last inequality follows from Lemma \ref{lemma:BoundingTheTail}. 
\end{proof}

The proof of the main result can now be derived from the previous lemmas.
\begin{proof}[Proof of Theorem \ref{theorem:lpMinimizationAnalysis}]
By Lemma \ref{lemma:ErrorBound} we have
\begin{align*}
	\|z\|_2^{2p}
	&\leq \frac{1}{c_1^{p}d_1^{2p}}\bigg(\frac{1}{c_1^{p/2}}\left(\frac{\gamma_1 \| z\|^{2p}}{2} + \frac{ \| (\Psi_{T_{01}})^* \Psi_{T_{01}} \ztilde \|^{2p}}{2 \gamma_1}\right)+ \rho^{2-p}(c_2^{p/2}d_2^p\| z \|_2^p+ \eta)^2\bigg) \\
	&=\frac{1}{c_1^{p}d_1^{2p}}\left(\frac{1}{c_1^{p/2}}\bigg(\frac{\gamma_1 \| z\|^{2p}}{2} + \frac{ \| (\Psi_{T_{01}})^* \Psi_{T_{01}} \ztilde \|^{2p}}{2 \gamma_1}\right)   +\rho^{2-p}\left(c_2^{p}d_2^{2p}\| z \|_2^{2p}+ 2c_2^{p/2}d_2^p \| z \|_2^p\eta + \eta^2\right)\bigg) \\
	&\leq \frac{1}{c_1^{p}d_1^{2p}}\bigg(\frac{1}{c_1^{p/2}}\left(\frac{\gamma_1 \| z\|^{2p}}{2} + \frac{ \| (\Psi_{T_{01}})^* \Psi_{T_{01}} \ztilde \|^{2p}}{2 \gamma_1}\right)   +\rho^{2-p}\left(c_2^{p}d_2^{2p}\| z \|_2^{2p}+ \left(c_2^{p}d_2^{2p}\gamma_2 \| z \|_2^{2p} + \frac{\eta^2}{\gamma_2}\right) + \eta^2\right) \bigg)\\
	&= \frac{1}{c_1^{p}d_1^{2p}}\bigg(\frac{1}{c_1^{p/2}}\left(\frac{\gamma_1 \| z\|^{2p}}{2} + \frac{ \| (\Psi_{T_{01}})^* \Psi_{T_{01}} \ztilde \|^{2p}}{2 \gamma_1}\right)  +\rho^{2-p}\left( c_2^{p} d_2^{2p}(1 + \gamma_2 )\| z \|_2^{2p} + \left(\frac{1}{\gamma_2} + 1\right) \eta^2\right) \bigg).
\end{align*} 
Therefore
\begin{align*}
\left(1 - \frac{1}{c_1^pd_1^{2p}}\left( \frac{\gamma_1}{2c_1^{p/2}} + \rho^{2-p} c_2^pd_2^{2p}(1+\gamma_2)\right) \right)\|z\|_2^{2p} 
\leq \frac{1}{c_1^pd_1^{2p}}\left(\frac{\| (\Psi_{T_{01}})^* \Psi_{T_{01}} \ztilde \|^{2p}}{2\gamma_1c_1^{p/2}}+ \rho^{2-p}\left(\frac{1}{\gamma_2} + 1\right) \eta^2\right).
\end{align*}
and in particular
\begin{align*}
\left(c_1^pd_1^{2p} - \left( \frac{\gamma_1}{2c_1^{p/2}} + \rho^{2-p} c_2^pd_2^{2p}(1+\gamma_2)\right) \right)^{1/2}\|z\|_2^{p}  \leq \frac{\| (\Psi_{T_{01}})^* \Psi_{T_{01}} \ztilde \|^{p}}{\sqrt{2\gamma_1c_1^{p/2}}}+ \rho^{1-p/2}\left(\frac{1}{\gamma_2} + 1\right)^{1/2} \eta.
\end{align*}
and hence, by using Lemma \ref{lemma:ConsequenceRIP} we obtain
\begin{align*}
\Gamma_1(p) \|z\|_2^p - \Gamma_2(p) \eta \leq (2 \eps )^p 
\end{align*}
with
\begin{align*}
\Gamma_1(p) = \sqrt{2\gamma_1c_1^{p/2}(1-\delta_{s+M})^{p}(c_1^pd_1^{2p} - ( \frac{\gamma_1}{2c_1^{p/2}} + \rho^{2-p} c_2^pd_2^{2p}(1+\gamma_2)) )}  -\sqrt{(c_2 d_2^2(1+\delta_M))^{p}\rho^{2-p}} 
\end{align*}
and
\begin{align*}
\Gamma_2(p)  =   \sqrt{2\gamma_1c_1^{p/2}(1-\delta_{s+M})^{p} \rho^{2-p}\left(\frac{1}{\gamma_2} + 1\right)} -\sqrt{(c_2d_2^2 (1+\delta_M))^{p}\rho^{2-p}} .
\end{align*}
It is left to choose $\gamma_1, \gamma_2,$ and $\rho$ so that $\Gamma_1(p)$ and $\Gamma_2(p)$ are non-negative. 
W.l.o.g. we can assume $c_1 \geq 1$, otherwise a rescaling can be performed. We now choose $\gamma_1 = c_1^{3p/2}d_1^{2p}$ and 
$M = s\frac{2\cdot (1+2^{-p})^{1/(p-2)}c_2^{p/(p-2)}d_2^{2p/(p-2)}}{c_1^{p/(p-2)}}$. Then $\Gamma_1(p)$ is larger than zero and 
for $\gamma_2$ sufficiently small $\Gamma_2(p)$ is also larger than zero. Furthermore, one can check
\begin{align*}
	\Gamma_1'(p)<0 \quad \text{and} \quad \Gamma_2'(p) >0
\end{align*}
holds for $\gamma_2$ small enough. This completes the proof.

\end{proof}

\subsection{Proof of Corollary \ref{corollary:lpMinimizationAnalysisParsevalFrame}}

The Proof of Corollary \ref{corollary:lpMinimizationAnalysisParsevalFrame} follows the same lines as the Proof of Theorem 
\ref{theorem:lpMinimizationAnalysis} and we skip the details. We only have to verify the statement for regarding the constants. 
For $c_1 = c_2 =1$ the above computations reduce to 
\begin{align*}
\Gamma_1(p) \|z\|_2^p - \Gamma_2(p) \eta \leq (2 \eps )^p 
\end{align*}
with the constants
\begin{align*}
\Gamma_1(p) &= \sqrt{2\gamma_1(1-\delta_{s+M})^{p}\left( 1- \left( \frac{\gamma_1}{2} + \rho^{2-p} (1+\gamma_2)\right) \right)} -\sqrt{( (1+\delta_M))^{p}\rho^{2-p}}  \\
\Gamma_2(p) & =   \sqrt{2\gamma_1(1-\delta_{s+M})^{p} \rho^{2-p}\left(\frac{1}{\gamma_2} + 1\right)} -\sqrt{( (1+\delta_M))^{p}\rho^{2-p}} .
\end{align*}
One possibility to guarantee $\Gamma_1(p)$ and $\Gamma_2(p)$ to be positive is to choose $\gamma_1 = 1, M = 6s$ then $\Gamma_1(p)$ is positive and for $\gamma_2$ sufficiently small  $\Gamma_2(p)$ is positive. 
Moreover, note that
\begin{align*}
	\Gamma_1'(p)<0 \quad \text{and} \quad \Gamma_2'(p) >0
\end{align*}
for $\gamma_2$ small enough.

\subsection{Proof of Theorem \ref{theorem:lpDualMinimizationAnalysis}}\label{subsecProofDualProblem}

The proof follows the argumentation shown in Section \ref{subsec:ProofMain}. We will summarize the results that are needed and 
sketch a proof of the result.

\begin{prop}\label{prop:aux}
Retaining the notations and definition Subsection \ref{subsec:ProofMain} the following estimates hold
\begin{itemize}
	\item[i)]
	\begin{align*}
		\| \Psitilde_{T_0^c}z\|_p^p \leq 2 \| \Psitilde_{T_0^c} x \|_p^p + \| \Psitilde_{T_0} z \|_p^p.
	\end{align*}
	\item[ii)]
	\begin{align*}
	\sum_{j\geq 2} \| \Psitilde_{T_j}z \|_2^p \leq \rho^{1-p/2}(\|\Psitilde_{T_0} z \|_2^p + \eta).
	\end{align*}
	\item[iii)]
	\begin{align*}
	(1-\delta_{s+M})^{p/2} \|(\Psi_{T_{01}})^*\Psi_{T_{01}} \ztilde \|_2^p - (c_2 (1+\delta_M))^{p/2}\rho^{1-p/2} \left( c_1^{-p/2} \|  z\|_2^p + \eta\right) \leq 	(2 \eps )^p
	\end{align*}
	\item[iv)]
	\begin{align*}
		\|z\|_2^{2p} \leq c_2^{p}\left(\| z\|_2^{p} \| (\Psitilde_{T_{01}})^* \Psitilde_{T_{01}} \ztilde \|_2^{p} + \rho^{2-p}(\| \Psitilde_{T_0} z \|_2^p+ \eta)^2\right).
	\end{align*}
\end{itemize}
\end{prop}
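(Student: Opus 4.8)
The plan is to establish i)--iv) by transcribing, one at a time, Lemmas \ref{lemma:ConeConstraint}--\ref{lemma:ErrorBound} from Subsection \ref{subsec:ProofMain}, subject to two systematic changes: wherever the sparsifying system $\Psi$ appeared it is replaced by the dual system $\Psitilde$ (since \eqref{eq:lpDualMinimizationAnalysis} minimizes $\|\Psitilde x\|_p^p$), and the identifiability constants $d_1,d_2$ are dropped throughout. The reason the second change costs nothing is conceptual: the dual analysis coefficients of the error reconstruct it over the \emph{primal} frame, because by \eqref{eq:FrameIdentity} we have $z = \Psi^{*}\Psitilde z = \sum_{\lambda}\langle z,\psitilde_\lambda\rangle\psi_\lambda$, so a compressible pattern of $\Psitilde z$ is \emph{already} a near-sparse representation of $z$ in the system controlled by the $\Psi$-RIP, and no mismatch is left to repair. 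As in Subsection \ref{subsec:ProofMain} I set $\ztilde := (\Psi^{*}\Psi)^{-1}z$, the canonical-dual preimage of $z$, so that $\langle\ztilde,\psi_\lambda\rangle = \langle z,\psitilde_\lambda\rangle$ and hence $\Psi_{T_{01}}\ztilde = \Psitilde_{T_{01}}z$; the blocks $T_0,T_1,\ldots$ are now ordered by decreasing magnitude of $\Psitilde z$, with $T_0$ the $s$ largest coefficients of $\Psitilde x$, so that $\eta = 2\|\Psitilde_{T_0^c}x\|_p^p/s^{1-p/2} = 2\|\Psitilde x-(\Psitilde x)_s\|_p^p/s^{1-p/2}$.

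For i) I would copy the proof of Lemma \ref{lemma:ConeConstraint} verbatim with $\Psi\to\Psitilde$: feasibility of $x$ and $x^{*}$, minimality of $\|\Psitilde x^{*}\|_p^p$, and the $p$-triangle inequality $\|a+b\|_p^p\le\|a\|_p^p+\|b\|_p^p$ (valid for $0<p\le1$) give the cone constraint on $\Psitilde z$. For ii) I would copy the proof of Lemma \ref{lemma:BoundingTheTail} with $\Psi\to\Psitilde$: the monotone ordering of the blocks bounds every entry of $|\Psitilde_{T_{j+1}}z|^p$ by the average $\|\Psitilde_{T_j}z\|_p^p/M$, summation in $j$ together with the generalized mean inequality converts $\ell^p$ into $\ell^2$, and i) absorbs the tail, leaving the claim with $\rho=s/M$.

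For iii) I would follow the proof of Lemma \ref{lemma:ConsequenceRIP}. The tube constraint $\|Az\|_2\le2\eps$ (Lemma \ref{lemma:TubeConstraint}) is unchanged; decomposing $z = (\Psi_{T_{01}})^{*}\Psi_{T_{01}}\ztilde + \sum_{j\ge2}(\Psi_{T_j})^{*}\Psi_{T_j}\ztilde$ and using the reverse triangle inequality bounds $\|Az\|_2^p$ from below. The $\Psi$-RIP, applied at order $s+M\le\nu$, contributes $(1-\delta_{s+M})^{p/2}\|(\Psi_{T_{01}})^{*}\Psi_{T_{01}}\ztilde\|_2^p$ from below on the $T_{01}$-block and factors $(1+\delta_M)^{p/2}$ from above on the tail blocks. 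On the tail, the synthesis bound $\|(\Psi_{T_j})^{*}c\|_2\le\sqrt{c_2}\,\|c\|_2$, estimate ii), and the \emph{lower frame bound of the dual frame} $\|\Psitilde_{T_0}z\|_2\le\|\Psitilde z\|_2\le c_1^{-1/2}\|z\|_2$ collapse the sum into $c_2^{p/2}\rho^{1-p/2}(c_1^{-p/2}\|z\|_2^p+\eta)$; rearranging gives iii). In particular the product $c_2^{p/2}d_2^p$ of Lemma \ref{lemma:ConsequenceRIP} is here simply $c_1^{-p/2}$, the identifiability constant being replaced by a dual frame bound. For iv) I would follow the proof of Lemma \ref{lemma:ErrorBound}: the lower frame bound of the dual frame gives $\|z\|_2^{2p}\le c_2^{p}\|\Psitilde z\|_2^{2p}$; split $\|\Psitilde z\|_2^2=\|\Psitilde_{T_{01}}z\|_2^2+\|\Psitilde_{T_{01}^{c}}z\|_2^2$ and apply $(a+b)^p\le a^p+b^p$; on the $T_{01}$-block write $\|\Psitilde_{T_{01}}z\|_2^{2p}=\langle\ztilde,(\Psi_{T_{01}})^{*}\Psi_{T_{01}}\ztilde\rangle^{p}\le\|\ztilde\|_2^{p}\,\|(\Psi_{T_{01}})^{*}\Psi_{T_{01}}\ztilde\|_2^{p}$ by Cauchy--Schwarz, using $\|\ztilde\|_2\le\|z\|_2$ after the rescaling $c_1\ge1$ as in the proof of Theorem \ref{theorem:lpMinimizationAnalysis}; and bound the $T_{01}^{c}$-block by ii) as $\rho^{2-p}(\|\Psitilde_{T_0}z\|_2^p+\eta)^2$. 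Collecting the two contributions yields the displayed inequality, the $T_{01}$-term being exactly the one that appears in iii).

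The remainder of the argument is then identical to the end of the proof of Theorem \ref{theorem:lpMinimizationAnalysis}: insert iv) into the Cauchy-type splitting with auxiliary parameters $\gamma_1,\gamma_2>0$, invoke iii), and collect everything into an estimate $\Gamma_1(p)\|z\|_2^p-\Gamma_2(p)\eta\le(2\eps)^p$; since $\|z\|_2=\|x-x^{*}\|_2$ and $\eta$ is the normalized tail $\|\Psitilde x-(\Psitilde x)_s\|_p^p/s^{1-p/2}$, this is exactly the stability estimate of Theorem \ref{theorem:lpDualMinimizationAnalysis}. I expect the main obstacle --- and the place where the parameters $\nu = s(3/c_1^{2p})^{1/(2-p)}$ and $s<N(c_1^{2p}/3)^{1/(2-p)}$ of that theorem come from --- to be the constant bookkeeping: one has to choose $M$, hence $\rho=s/M$, small enough that the coefficient multiplying $\|z\|_2^{2p}$ on the right of the combined inequality drops strictly below $1$, so that $\Gamma_1(p)>0$ and that term can be absorbed on the left, while still keeping $M\le N$; this is precisely what pins down the admissible range of $s$. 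Checking $\Gamma_1'(p)<0$ and $\Gamma_2'(p)>0$ for $\gamma_2$ small is then the same elementary calculus as in Subsection \ref{subsec:ProofMain}. Because identifiability is never invoked, the constants already occur in the correct order, which is why --- unlike in Theorem \ref{theorem:lpMinimizationAnalysis} --- no controllability of the frame bounds is needed and the range of $s$ can be determined outright.
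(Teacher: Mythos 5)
Your proposal is correct and follows essentially the same route as the paper: the paper likewise treats i), ii), iv) as verbatim adaptations of Lemmas \ref{lemma:ConeConstraint}, \ref{lemma:BoundingTheTail} and \ref{lemma:ErrorBound} with $\Psi$ replaced by $\Psitilde$, and proves iii) exactly as you do, via the tube constraint, the $\Psi$-RIP on the blocks, the identity $\Psi_{T_j}\ztilde=\Psitilde_{T_j}z$, and the bound $\|\Psitilde_{T_0}z\|_2\leq c_1^{-1/2}\|z\|_2$ (which, minor quibble, is the \emph{upper} frame bound $1/c_1$ of the canonical dual, not its lower one).
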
 

\begin{proof}
All properties are proved in the same manner as in Subsection \ref{subsec:ProofMain}, in particular, item \textit{i)}, \textit{ii)}, and \textit{iv)} are trivial adaptions and will be skipped. As for item \textit{iii)}, note that
\begin{align*}
(2\eps)^p &\geq \| Az \|_2^p \\
&\geq \| A (\Psi_{T_{01}})^*\Psi_{T_{01}} \ztilde \|_2^p - \sum_{j \geq 2} \| A (\Psi_{T_{j}})^*\Psi_{T_{j}} \ztilde \|_2^p \\
&\geq  (1- \delta_{s+M})^{p/2}\| (\Psi_{T_{01}})^*\Psi_{T_{01}} \ztilde \|_2^p  -(1+ \delta_M)^{p/2}  \sum_{j \geq 2} \|  (\Psi_{T_{j}})^*\Psi_{T_{j}} \ztilde \|_2^p.
\end{align*}
By \textit{ii)} we have
\begin{align*}
\sum_{j \geq 2} \| (\Psi_{T_j})^* \Psi_{T_j} \ztilde \|_2^p &\leq c_2^{p/2} \sum_{j \geq 2} \| \Psi_{T_j} \ztilde \|_2^p \\
&=c_2^{p/2} \sum_{j \geq 2} \| \Psitilde_{T_j} z \|_2^p \\
&\leq c_2^{p/2} \rho^{1- p/2}(\| \Psitilde_{T_0} z \|_2^p + \eta).
\end{align*}
Thus the claim follows.
\end{proof}

\begin{proof}[Sketch of Proof of Theorem \ref{theorem:lpDualMinimizationAnalysis}]
By Proposition \ref{prop:aux} \textit{iv)} we have
\begin{align*}
	\|z \|_2^{2p} &\leq c_2^p \bigg[ \left( \frac{\gamma_1 \| z \|_2^{2p}}{2} + \frac{ \| (\Psitilde_{T_{01}})^*\Psitilde_{T_{01} z \|_2^p}}{2\gamma_1}\right)  + \rho^{2-p} ( \|\Psitilde_{T_{01}} z \|_2^{2p} + 2 \| \Psitilde_{T_{0}}z \|_2^p \eta + \eta^2 \bigg] \\
&\leq c_2^p \bigg[ \left( \frac{\gamma_1 \| z \|_2^{2p}}{2} + \frac{ \| (\Psitilde_{T_{01}})^*\Psitilde_{T_{01} z \|_2^p}}{2\gamma_1}\right)   +\rho^{2-p} \left( \|\Psitilde_{T_{01}} z \|_2^{2p}  + \eta^2 + \left(\gamma_2 \| \Psitilde_{T_{0}}z \|_2^{2p} + \frac{\eta^2}{\gamma_2}\right)\right)\bigg] \\
&\leq c_2^p \bigg[ \left( \frac{\gamma_1 \| z \|_2^{2p}}{2} + \frac{ \| (\Psitilde_{T_{01}})^*\Psitilde_{T_{01} z \|_2^p}}{2\gamma_1}\right)  + \rho^{2-p} \left(\frac{1}{c_1^{2p}}(1+\gamma_2) \| z \|_2^{2p}  + \left(1 + \frac{1}{\gamma_2}\right)\eta^2 \right)\bigg].
\end{align*}
Therefore
\begin{align*}
\left( 1- \frac{c_2^p \gamma_1}{2} - \frac{\rho^{2-p}(1+\gamma_2)}{c_1^{2p}}\right) \|z \|_2^{2p}  \leq \frac{c_2^p}{2 \gamma_1} \| (\Psitilde_{T_{01}})^*\Psitilde_{T_{01}} z \|_2^{2p} + \rho^{2-p} \left( 1 + \frac{1}{\gamma_2}\right) \eta^2
\end{align*}
which in turn implies 
\begin{align*}
	\sqrt{\frac{2 \gamma_1}{c_2^p}}\bigg[\left( 1- \frac{c_2^p \gamma_1}{2} - \frac{\rho^{2-p}(1+\gamma_2)}{c_1^{2p}} \right) \| z \|_2^p -\rho^{1-p/2}\left(1 + \frac{1}{\gamma_2}\right)^{1/2} \eta \bigg] \leq \| (\Psitilde_{T_{01}})^*\Psitilde_{T_{01}} z \|_2^{p}.
\end{align*}
Using Proposition \ref{prop:aux} \textit{iii)} we conclude
\begin{align*}
	 \Gamma_1(p) \| z \|_2^p- \Gamma_2(p)\eta \leq (2\eps)^p
\end{align*}
where
\begin{align*}
\Gamma_1(p) = \sqrt{\frac{2 \gamma_1}{c_2^p} \left( 1- \frac{c_2^p \gamma_1}{2} - \frac{\rho^{2-p}(1+\gamma_2)}{c_1^{2p}}\right)( 1-\delta_{s+M})^p} -\sqrt{(1+\delta_M)^p\left(\frac{c_2}{c_1}\right)^p\rho^{2-p}}
\end{align*}
and
\begin{align*}
\Gamma_2(p) = \sqrt{(1- \delta_{s+M})^p \rho^{2-p} \left(1 + \frac{1}{\gamma_2}\right)\frac{2\gamma_1}{c_1^p}}- \sqrt{\rho^{2-p}\left(\frac{c_2}{c_1}\right)^p(1+\delta_M)^p)}.
\end{align*}
Thus choosing, for example, $\gamma_1 \leq c_1^{-p}$ and $M = s \left(\frac{2}{c_1^{2p}}\right)^{1/(2-p)}$ yields the result. Note that it is possible to choose to choose the constants such that they obey
\begin{align*}
	\Gamma_1'(p)<0 \quad \text{and} \quad \Gamma_2'(p) >0.
\end{align*}
\end{proof}

\section{Numerics}\label{section:Numerics}

The non-convex minimization problem \eqref{eq:lpMinimizationAnalysis} is in general NP-hard although interior point methods
exist to find local minimima, \cite{DonXiaYin}. The are also other algorithms proposed to solve or rather approximate
non-convex $\ell^p$-minimization problem and comparison of some methods can be found, for instance, in \cite{LyuLinShe}. 
Many of these methods, are based on \emph{iterative reweighting} which originally stems from \cite{CanWakBoy} and has been used 
to strengthen the effect of sparsity. The ideas can be transferred  to solve the analysis-based $\ell^p$-minimization problem. 
In the next section we present the algorithm that we use to solve \eqref{eq:lpMinimizationAnalysis}. It can also be found 
in \cite{FouLai} for the synthesis formulation.

\subsection{Approximating the $\ell^p$-problem by reweighting} \label{Subsec:Algorithm}

In order to solve the constrained $\ell^p$-analysis minimization problem \eqref{eq:lpMinimizationAnalysis} under the presence 
of noise we use an algorithm that is based on the following intuitive discussion.
Solving \eqref{eq:lpMinimizationAnalysis}, that is solving
\begin{align*}
	\min_x \| \Psi x \|_p^p \quad \text{ s.t. } \quad \|y- Ax\|_2 \leq \varepsilon 
\end{align*}
is equivalent to solving
\begin{align*}
	\min_x \sum_{\lambda} \frac{| \Psi x |_\lambda}{| \Psi x |_\lambda^{1-p}}  \quad \text{ s.t. } \quad \|y - Ax\|_2 \leq \varepsilon,
\end{align*}
where we neglect for a moment the fact that dividing by $| \Psi x |_\lambda$ is by no means always mathematically legit.
Using the idea of reweighted $\ell^1$-minimization \cite{CanWakBoy} we wish to strengthen the effect of sparsity by using the 
exact signal $x_0$ to consider the weighted $\ell^1$ problem
\begin{align*}
	\min_x \sum_{\lambda} \frac{| \Psi x |_\lambda}{| \Psi x_0 |_\lambda^{1-p}}  \quad \text{ s.t. } \quad \|y - Ax \|_2 \leq \varepsilon .
\end{align*}
If $(\Psi x_0)_\lambda =0$, the weight is set as $\infty$. Clearly, multiplying the denominator by some $\mu>0$ does not change 
the minimizer, hence, we consider
\begin{align*}
	\min_x \sum_{\lambda} \frac{| \Psi x |_\lambda}{(\mu | \Psi x_0 |_\lambda)^{1-p}}  \quad \text{ s.t. } \quad \|y - Ax\|_2 \leq \varepsilon .
\end{align*}
The purpose of $\mu$ is  to have better numerical control over the magnitude of the analysis coefficients and should be signal 
dependent. However, since $x_0$ is not available in advance we might consider using an approximate vector $x^k$ and consider
\begin{align}
	\min_x \sum_{\lambda} \frac{| \Psi x |_\lambda}{(\mu| \Psi x^k |_\lambda)^{1-p}}  \quad \text{ s.t. } \quad \|y - Ax\|_2 \leq \varepsilon .\label{problem:lpnumerical0}
\end{align}
The approximate vector $x^k$ can be obtained by solving \eqref{problem:lpnumerical0} after an initialization of $x_0$. Finally, 
in order to prevent any instabilities one can introduce $\nu>0$ and solve
\begin{align}
	\min_x \sum_{\lambda} \frac{| \Psi x |_\lambda}{(\mu| \Psi x^k |_\lambda + \nu)^{1-p}}  \quad \text{ s.t. } \quad \|y - Ax\|_2 \leq \varepsilon.\label{problem:lpnumerical}
\end{align}
The final program is then the following.

\begin{algorithm}
\SetKwData{Left}{left}\SetKwData{This}{this}\SetKwData{Up}{up}
\SetKwFunction{Union}{Union}\SetKwFunction{FindCompress}{FindCompress}
\SetKwInOut{Input}{Input}\SetKwInOut{Output}{Output}
 \Input{ $y, \mu,  \varepsilon,M$.}
\Output{$x$}
Initialize $k=0, W = (w_{\lambda})_{\lambda} = 1$.

 \While{$ k \leq M$}{
Find solution of \eqref{problem:lpnumerical}, i.e. find $x^k$ such that
\begin{align}
	x^k = \argmin_x \| W \Psi x \|_1 \quad \text{ s.t. } \quad \| y - Ax\|_2 \leq \varepsilon. \label{NESTA}
\end{align}
Update $W$ by setting
\begin{align*}
	W_\lambda = \frac{1}{(\mu |\Psi x^k|_\lambda + \nu)^{1-p}}.
\end{align*}
Increase $k \to k+1$.
 }
\caption{Algorithm used to solve the minimization problem \eqref{eq:lpMinimizationAnalysis}}\label{Algorithm}
\end{algorithm}

Note that \eqref{problem:lpnumerical} and \eqref{NESTA},respectively, can be solved using many different common $\ell^1$-solvers
such as NESTA (\cite{Nesta}) which is available at
\begin{center}
\texttt{http://statweb.stanford.edu/\~{}candes/nesta/}
\end{center}

\subsection{Reconstruction from Fourier measurements using shearlets}

Shearlets were first introduced in \cite{GuoKutLab2006, LLKW2007} as a frame for the Hilbert space $L^2(\R^2)$ that are based on 
\emph{anisotropic scaling}, \emph{shearing}, and \emph{translations}. It construction is not limited to the two dimensional case
and the higher dimensional case has been considered in \cite{KLemLim4}. We shall not go into detail of the theoretical
concepts of shearlets but refer the reader to \cite{ShearletBook, KitKutLim, Lim2, KutLimRei} for theoretical background an 
implementations of such systems.

The shearlet implementation used in this paper are downloaded from
\begin{center}
\texttt{http://www.shearlab.org}
\end{center}

The object of interest is the GLPU phantom introduced in \cite{GLPU} which can be downloaded at
\begin{center}
\texttt{http://bigwww.epfl.ch/algorithms/mriphantom/\#soft}
\end{center}

As already mentioned in the introduction, one of the applications of the analysis-based $\ell^p$-minimization problem is
magnetic resonance imaging where the sampling process is modeled by taking samples of the Fourier transform of the signal.
The minimization problem for this application is
\begin{align}
	\min_x \| \Psi x \|_p^p \text{ s.t. } \quad \| y - \mathcal{F}x \|_2 \leq \eps \label{MRI}
\end{align}
where $\Psi$ is the shearlet transform and $\mathcal{F}$ is the undersampled Fourier transform restricted onto a certain subset
of frequencies. Typical choices for the set of measured frequencies are samples that lie on continuous trajectories such as
horizontal lines, spirals, radial lines etc. For our numerical experiments we have chosen a radial sampling mask consisting of 
30 radial lines, see Figure \ref{fig:RMGLPU} below.

\begin{figure}[H]
\centering
	\includegraphics[width=.32\textwidth]{./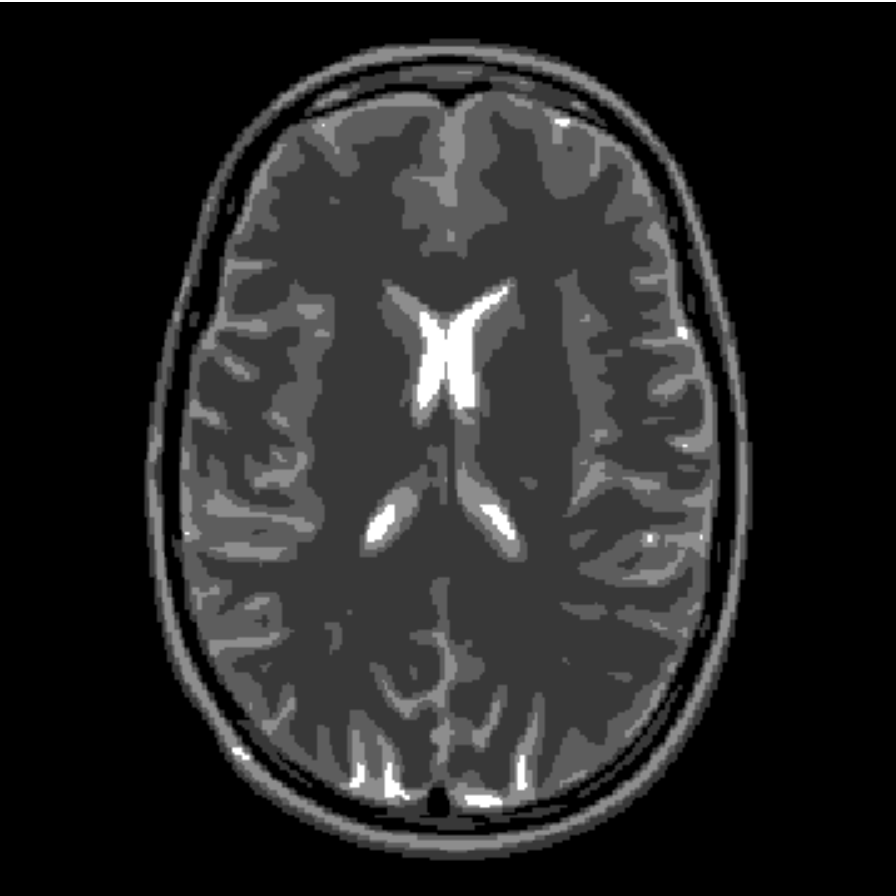}
	\includegraphics[width=.32\textwidth]{./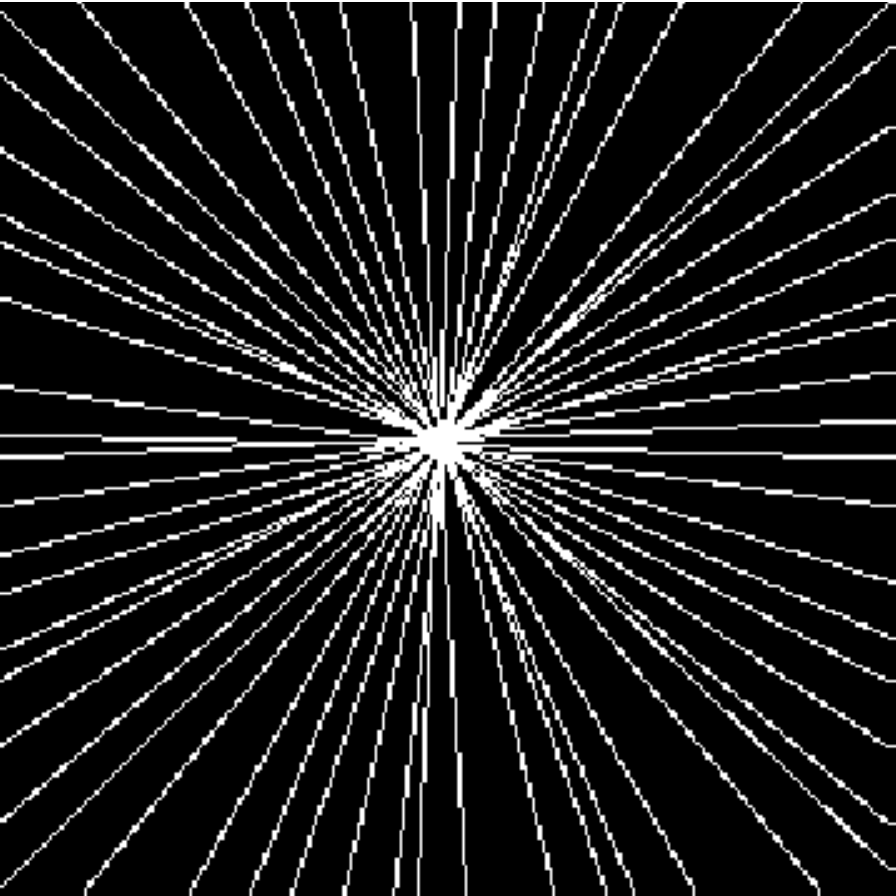}
	\includegraphics[width=.32\textwidth]{./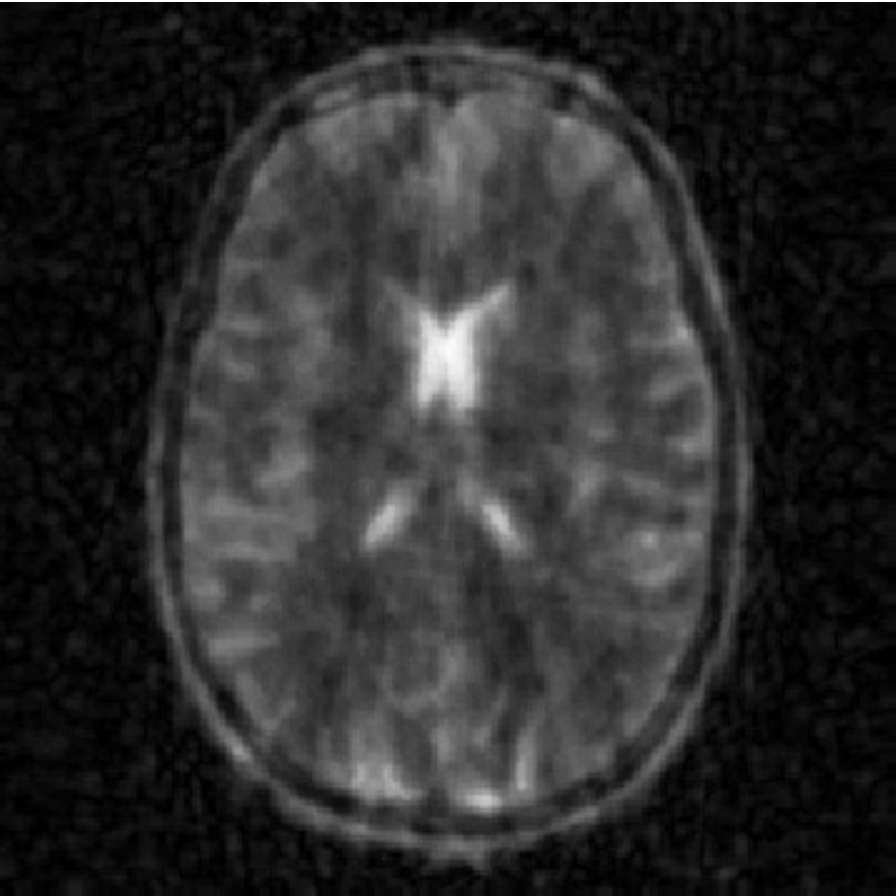}
\caption{\textbf{Left:} GLPU Phantom from \cite{GLPU}. \textbf{Middle:} Radial sampling pattern used in Fourier domain.
\textbf{Right:} Fourier inversion of data.} \label{fig:RMGLPU}
\end{figure}

We now let Algorithm \ref{Algorithm} run for a maximum of 10 iterations, i.e. $M =10$ in Algorithm \ref{Algorithm}. The
outcome of the relative error is depicted in Figure \ref{fig:RMGLPURE}. The curve for $p=1$ is not shown as it would be 
a constant line at the first error($\approx$ 17 \%). Otherwise, for decreasing $p$ we see a decrease with every increasing 
iteration $k$ of the error curves as $p$ get smaller. It is important to mention, that for $p=1$ the result corresponds to 
conventional compressed sensing, i.e. standard $\ell^1$-minimization. 

\begin{figure}[H]
\centering
	\includegraphics[width=.75\textwidth]{./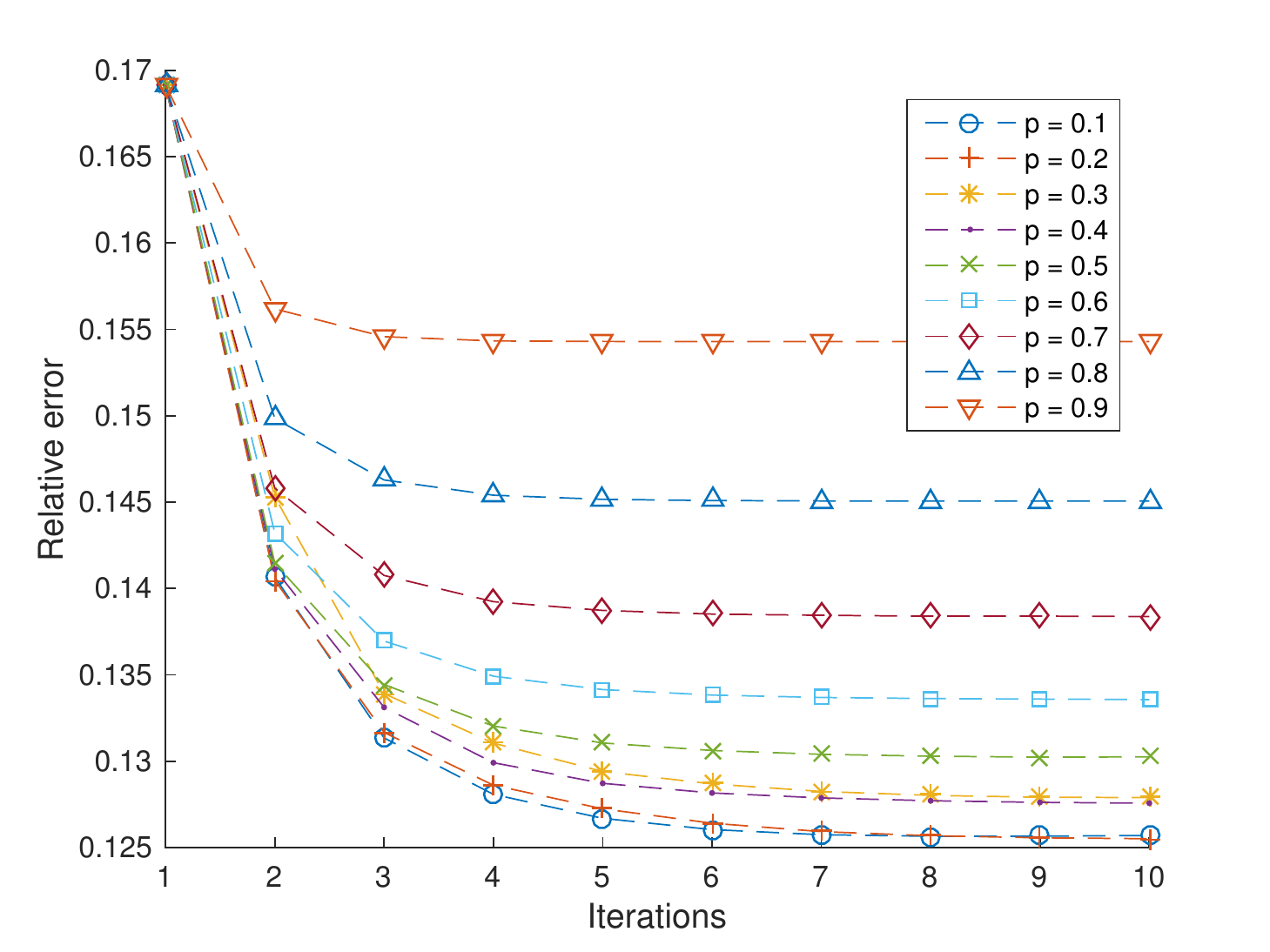}
\caption{Relative error.} \label{fig:RMGLPURE}
\end{figure}

\begin{figure}[H]
\centering
	\includegraphics[width=.49\textwidth]{./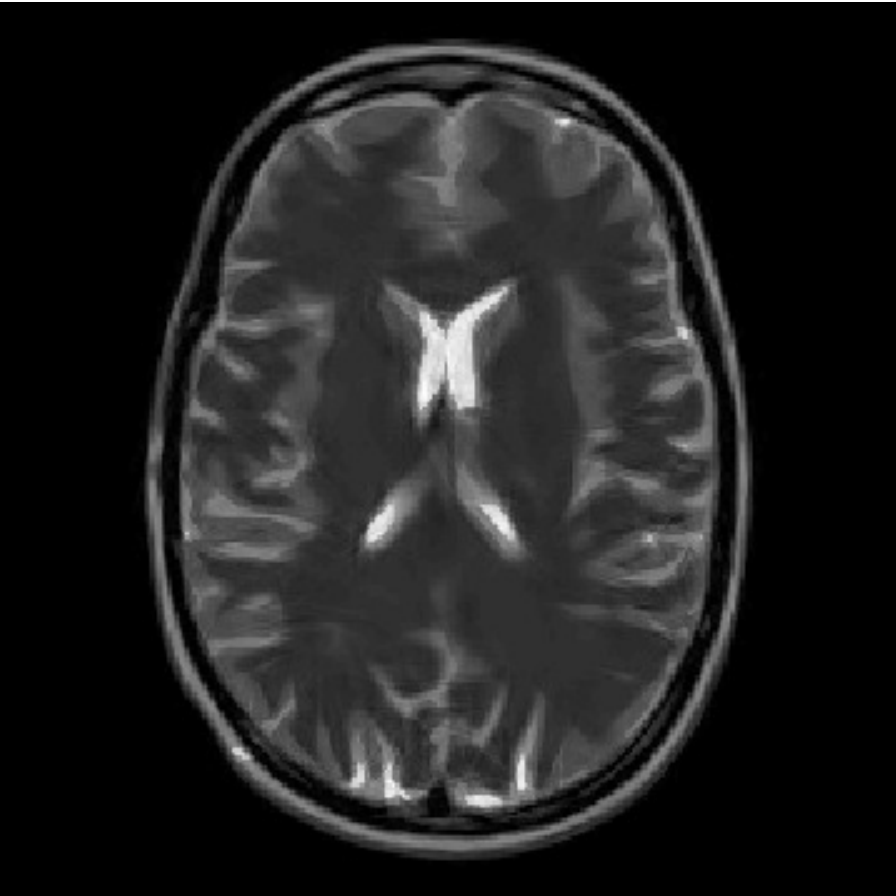}
	\includegraphics[width=.49\textwidth]{./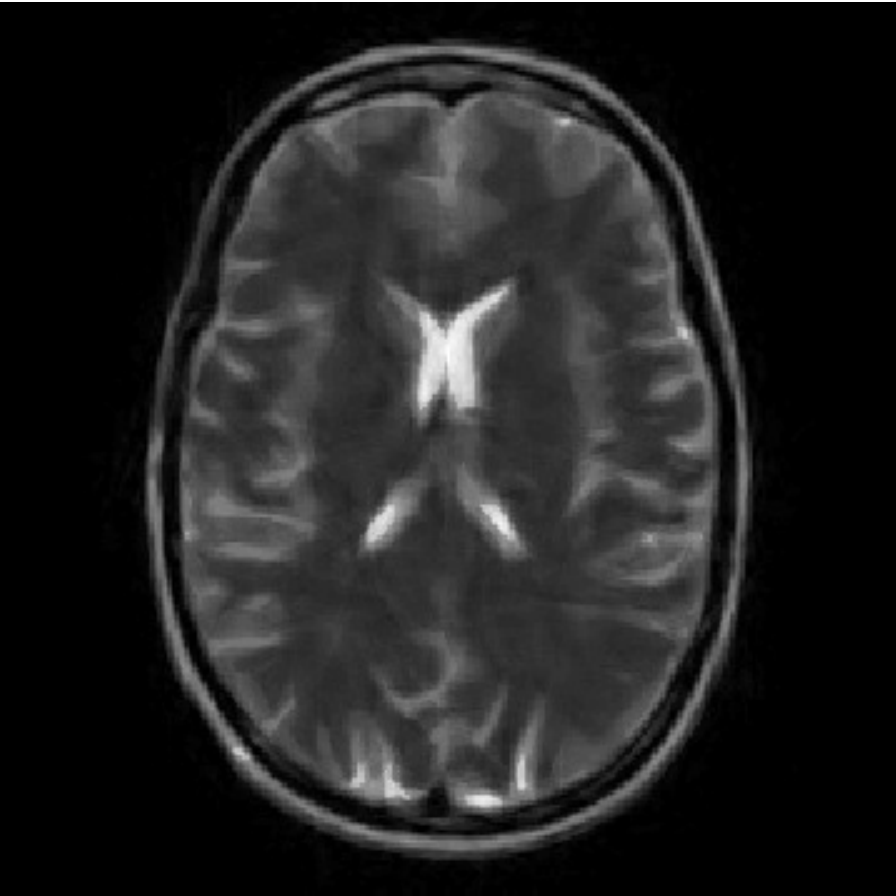}
\caption{\textbf{Left:} Reconstruction for $p = 0.1$. \textbf{Right:}  Reconstruction for $p = 1$.} \label{fig:RMGLPUR}
\end{figure}

Moreover, we like to mention that the other solutions are not obtained by using additional information except the computed
sparsity structure that is known computed from the previous $\ell^1$-minimization. More precisely, the $\ell^1$-solution is used
to obtain new weights which are now a good guess for the sparsity pattern. These weights are then returned to the algorithm so 
that it computes a new solution. In that sense the improvement is for free. However, It is computationally much more demanding
as we have to solve the minimization problem $M=10$ times. Further, the algorithm takes longer as $p$ gets smaller due to 
numerical instabilities,
cf. Figure \ref{fig:Wconv1}. 

In Figure \ref{fig:Wconv1} we plot the relative error for a different number of iterations. 
Indeed, the first $3 \times 3$ block plots show the nine sequences
\begin{align*}
 \left( \frac{ \| \Psi x^{k+1}_p - \Psi x^k_p \|}{\| \Psi x^{k+1}_p\|}\right)_{k = 1, \ldots, 10} \text{ for } p = 0.1, 0.2, \ldots, 0.9, 
\end{align*}
where
\begin{align*}
	x^{k+1}_p = \argmin_x \| W \Psi x \|_1 \quad \text{ subject to } \quad \| y - Ax\|_2 \leq \varepsilon
\end{align*}
with
\begin{align*}
	W_\lambda = \frac{1}{(\mu |\Psi x^k|_\lambda + \nu)^{1-p}}.
\end{align*}
That is, we show the outcome of Algorithm \ref{Algorithm} for all $p = 0.1, 0.2, \ldots, 0.9$ and increasing $k$ from 1 to 10. 
The second $3 \times 3$ block then shows the same sequence but starting from $k =2$ for better visibility of the error curves
\begin{align*}
	 \left( \frac{ \| \Psi x^{k+1}_p - \Psi x^k_p \|}{\| \Psi x^{k+1}_p\|}\right)_{k = 2, \ldots, 10} \text{ for }   p = 0.1, 0.2, \ldots, 0.9, 
\end{align*}
and the last one shows
\begin{align*}
	 \left( \frac{ \| \Psi x^{k+1}_p - \Psi x^k_p \|}{\| \Psi x^{k+1}_p\|}\right)_{k = 3, \ldots, 10} \text{ for }   p = 0.1, 0.2, \ldots, 0.9.
\end{align*}
By plotting the same sequence with $k\geq 2$ and $k \geq 3$ we can observe that the relative error decreases very quickly to 
zero after very few iterations which suggest fast numerical convergence of the algorithm. However, it also shows that such an 
approach via reweighting yields to less stable outcomes for very small $p$ which is strongly visible in the third $3 \times 3$ 
block plot for $p = 0.1$ and larger $k$. The oscillations also show that the choice of a stopping criterion has to be done 
very carefully. We haven't incorporated any but the maximum number of iterations, which yield a termination if this number is 
reached.

\begin{figure}[H]
\centering
\includegraphics[height=.32\textheight, width = .85\textwidth]{./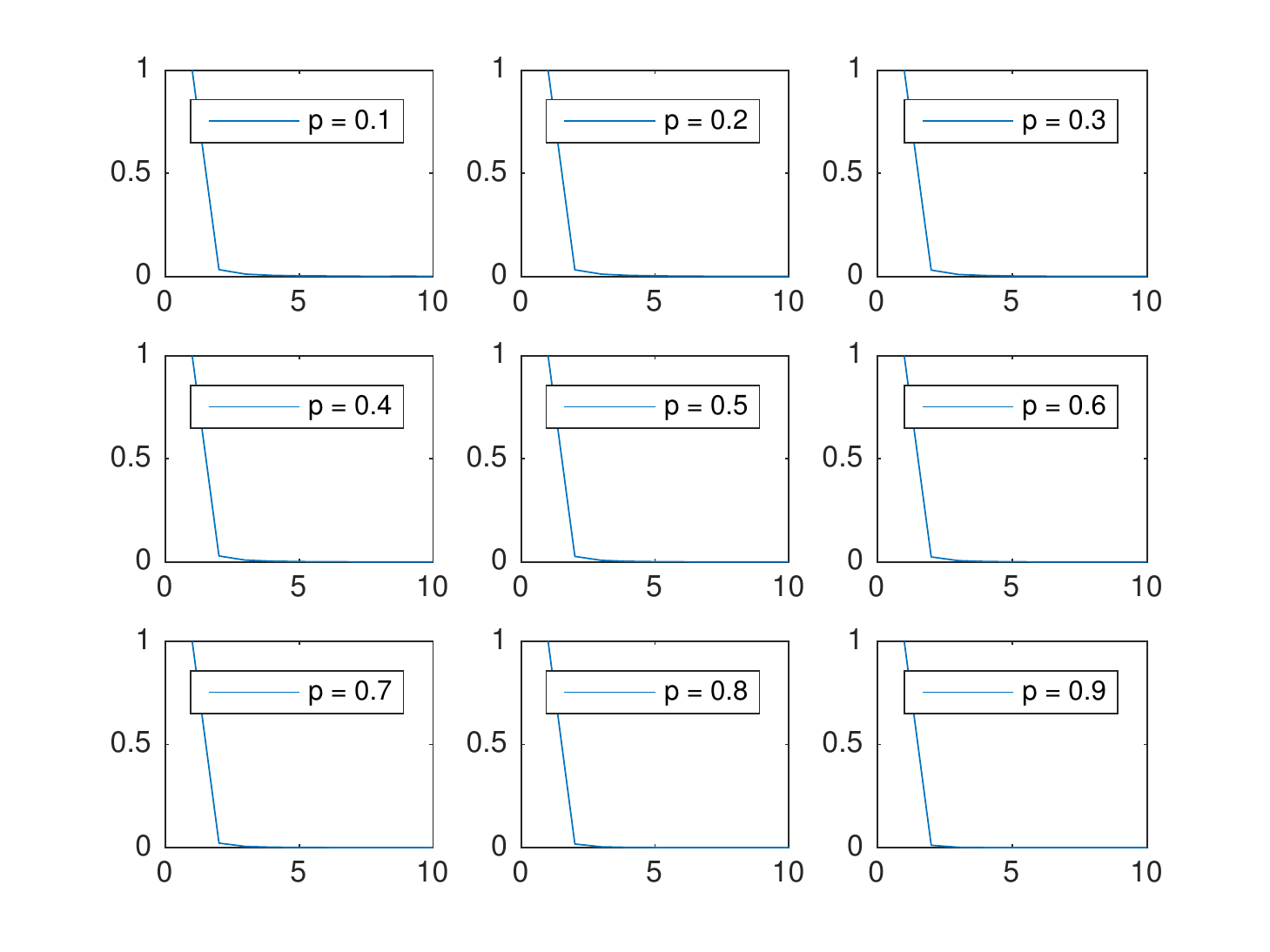}

\includegraphics[height=.32\textheight, width = .85\textwidth]{./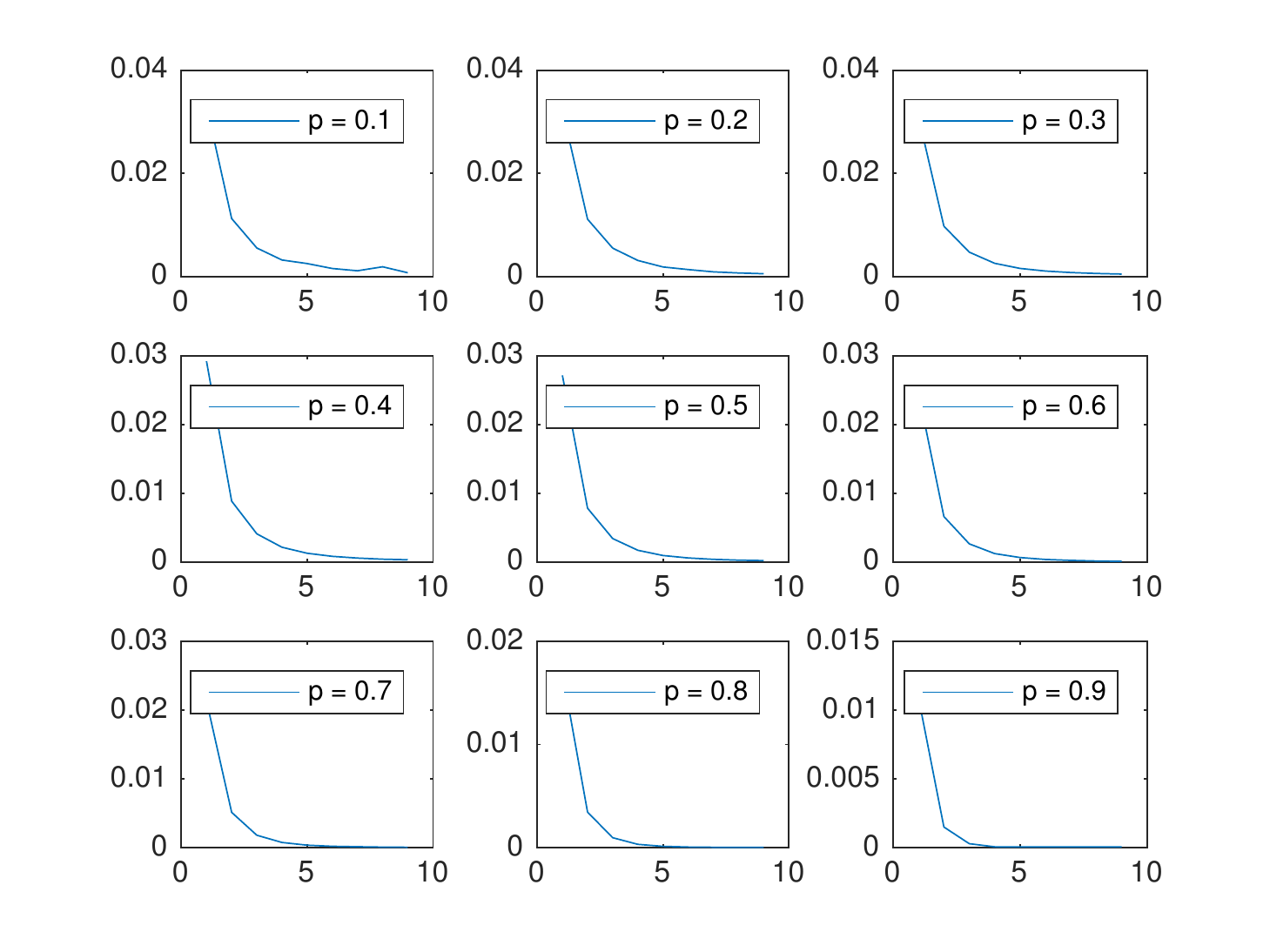}

\includegraphics[height=.32\textheight, width = .85\textwidth]{./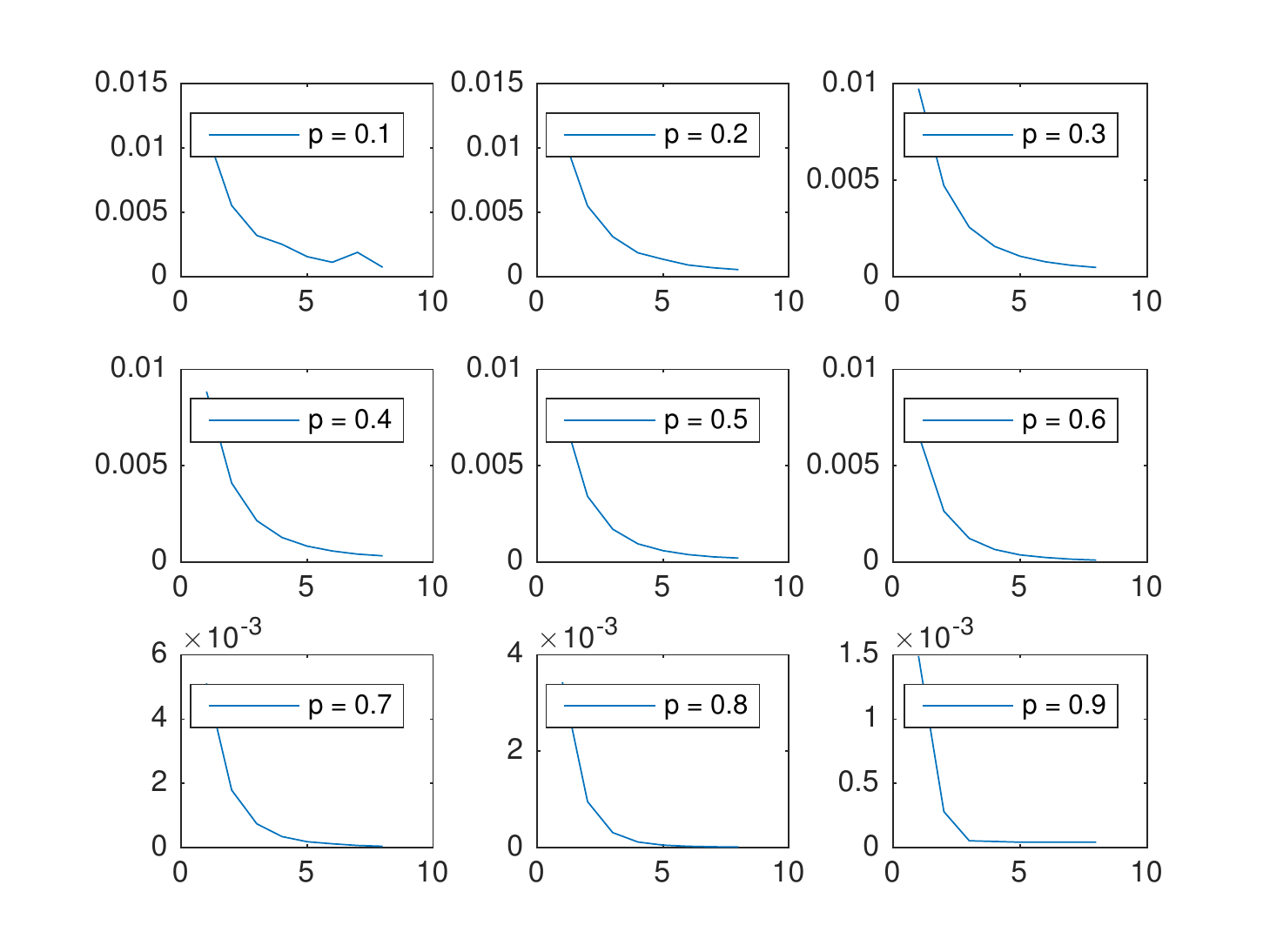}
 \caption{Relative error of the sequence of iterations. Starting from $k=1,2$, and $3$, respectively.}\label{fig:Wconv1}
\end{figure}

\subsection{$q$-controllability of shearlets}

Finally we give a numerical outcome of the $q$-controllability of the shearlet frame bounds \eqref{eq:AssumptionFrameBounds} in
Table \ref{table:FrameBounds} as this is important in practice. In fact, the numbers show that one can choose $
q \approx 20000$ in this particular case. It is important to have this number $q$ as large as possible since this in turn allows 
$s$ to be large in Theorem \ref{theorem:lpMinimizationAnalysis}.

\begin{table}[h!]
\centering
\begin{tabular}{@{}|c|c|c|c|}
\hline
Scale $J$ & $c_1/c_2$ & $N^{-1}$ & $ n \times n$\\
\hline 
2 & 0.1152& 7.266e-7& $256 \times 256$\\
\hline
3& 0.0889& 3.913e-7 & $256 \times 256$ \\
\hline
4& 0.883& 6.69e-8 & $512 \times 512$\\
\hline
5& 0.0628& 4.19e-8 & $512 \times 512$\\
\hline
6& 0.0527 & 7.6e-9 &$1024 \times 1024$\\
\hline
\end{tabular}
\caption{Verification of \eqref{eq:AssumptionFrameBounds}. In theses cases $q$ can be chosen to be $\approx$ 1e6. }\label{table:FrameBounds}
\end{table}

However, the numbers presented in Table \ref{table:FrameBounds} are very pessimistic and have to be interpreted in the right 
contest as the discussion of the next section will show.

\section{Redundancy versus sparsity in compressed sensing}\label{sec:RedundancyCS}

In the previous content of this article we have only considered the analysis formulation for redundant transforms. This is 
theoretically necessary, otherwise if the transform corresponds to a basis there is no point in distinguishing between
the analysis and synthesis formulation as these two problem would be equivalent. It has also been observed in applications
that redundant transforms can yield better results. This is for example the case for the redundant wavelet transform in image 
restoration \cite{StarMurFad}. From that point of view redundancy greatly helps and one might argue that it is also needed or 
at least desired in certain applications. The purpose of this section is to argue that although redundancy seems to yield a 
great benefit, one has to be careful and discuss: How much redundancy is good in practice? Moreover, the redundancy factor should 
be a discussion on its own and should not be confused with the results of this paper. 

We now consider two natural images of pixel size $2848\times2848$, shown in Figure \ref{fig:EagleSun} and demonstrate that typical 
(redundant) sparsifying transforms such as the wavelet and shearlet transform are from a compressed sensing point of view too 
redundant.
\begin{figure}[H]
\centering
  \includegraphics[width=.3\textwidth]{./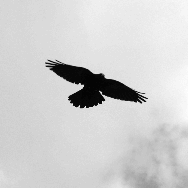}
 \qquad \qquad
\includegraphics[width=.3\textwidth]{./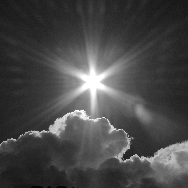}
\caption{Reference images}\label{fig:EagleSun}
\end{figure}

\subsection{Redundant wavelet transform}

We first test the redundant wavelet transform that is made available by van den Berg and Friedlander in the \texttt{spot} 
toolbox which can be downloaded at

\begin{center}
	\texttt{http://www.cs.ubc.ca/labs/scl/spot/index.html}
\end{center}

We first compute the wavelet decomposition of both reference images shown in Figure \ref{fig:EagleSun} for different total
number of scales $J = 2,4,6$. Out of these wavelet coefficients we have computed the best $s$-term approximation using 
$s = 90\%$ of the total number of pixels. This should not be confused with the number of total coefficients which is much
larger. In the notation of our previous results we let $s = 9/10 \cdot n$, where $n = 2848^2$ the dimension of the ambient
space. The reconstructions are shown in Figure \ref{fig:WaveletsWOsubsampling}.

\begin{figure}[H]
\centering
 \includegraphics[width=.3\textwidth]{./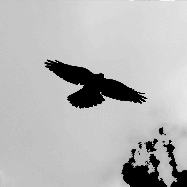}
 \ 
 \includegraphics[width=.3\textwidth]{./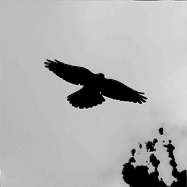}
 \ 
 \includegraphics[width=.3\textwidth]{./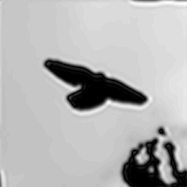}

 \includegraphics[width=.3\textwidth]{./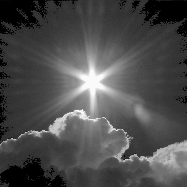}
 \ 
 \includegraphics[width=.3\textwidth]{./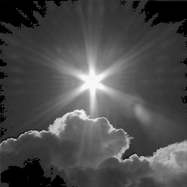}
 \ 
 \includegraphics[width=.3\textwidth]{./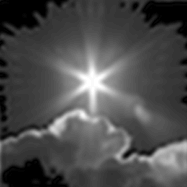}
 \caption{90\% of ambient dimension without subsampling (wavelets). Each column represents the reconstruction using
 2, 4, and 6 scales respectively.}\label{fig:WaveletsWOsubsampling}
\end{figure}
Obviously, the reconstructions get worse if more scales are available. More interestingly, the more coefficients there are 
available -- obtained by increasing the number of scales -- the more weight does the low frequency part gets in terms of the 
magnitudes of the coefficients resulting in an image that is very blurry. That shows that many more coefficients than the 
ambient dimension are needed in this case.

Now we conduct the same experiment again except that we do some subsampling before we take the best $s$-term approximation. More
precisely, we use the same coefficients but only consider every forth wavelet coefficient and set all other coefficients to zero.
Thus we divide the initial redundancy by a factor of 4. We then take the best $s$-term approximation using $s= 30\%$ of the 
total number of pixels. The outcome can be seen in Figure \ref{fig:WaveletsWsubsampling}. The reconstructed images in Figure 
\ref{fig:WaveletsWsubsampling} are significantly better than those obtained in Figure \ref{fig:WaveletsWOsubsampling}. From a 
theoretical point of view this behaviour is expected as the redundancy does not improve the approximation rate. However,
it also hints that the intrinsic sparsity of the image in the analysis coefficients $(\langle x, \psi_\lambda\rangle)_\lambda$
is much much smaller than what we can observe in the $s$-term approximation.

As we have already considered the shearlet transform in Section \ref{section:Numerics} of this paper we next want to show that 
this \emph{curse of redundancy} can also be observed in that particular case. 

\begin{figure}[H]
\centering
 \includegraphics[width=.3\textwidth]{./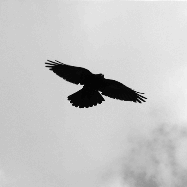}
 \ 
 \includegraphics[width=.3\textwidth]{./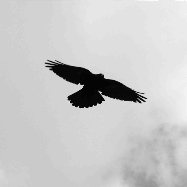}
 \ 
 \includegraphics[width=.3\textwidth]{./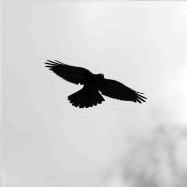}

 \includegraphics[width=.3\textwidth]{./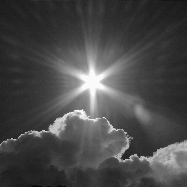}
 \ 
 \includegraphics[width=.3\textwidth]{./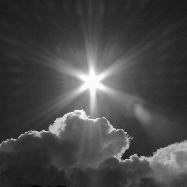}
 \ 
 \includegraphics[width=.3\textwidth]{./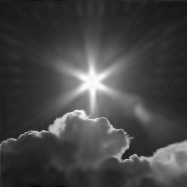}
 \caption{30\% of ambient dimension with subsampling (wavelets). Each column represents the reconstruction using
 2, 4, and 6 scales respectively.}\label{fig:WaveletsWsubsampling}
\end{figure}

\subsection{Redundant shearlet transform}

In this section we show the same experiment but with shearlets instead of wavelets. It is important to mention that the 
shearlet transform is truly redundant in the sense that there exists no non-redundant shearlet transform in the literature
so far.

\begin{figure}[H]
\centering
 \includegraphics[width=.3\textwidth]{./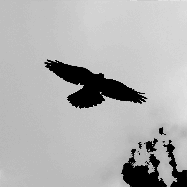}
 \ 
 \includegraphics[width=.3\textwidth]{./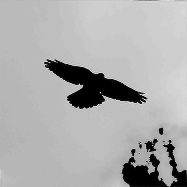}
 \ 
 \includegraphics[width=.3\textwidth]{./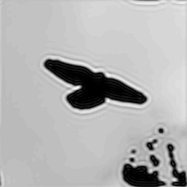}

 \includegraphics[width=.3\textwidth]{./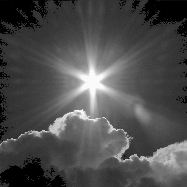}
 \ 
 \includegraphics[width=.3\textwidth]{./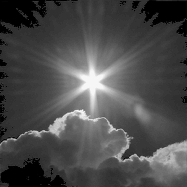}
 \ 
 \includegraphics[width=.3\textwidth]{./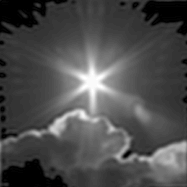}
 \caption{90\% of ambient dimension without subsampling (shearlets)  using 2, 4, and 6 scales.}\label{fig:ShearletsWOsubsampling}
\end{figure}
In Figure \ref{fig:ShearletsWOsubsampling} we show the best $s$-term approximation using again 90\% of the ambient dimension 
for a 2,4, and 6, level shearlet decomposition, respectively. We again observe that the images get more blurry
 in Figure \ref{fig:WaveletsWOsubsampling}. Similar as for wavelets we can observe that the low frequency part gets more and 
 more important if more scales are activated. 
 
Now we again subsample the already computed shearlet coefficients by only considering every forth coefficient and delete all 
others by setting them to zero. The reconstructions are shown in Figure \ref{fig:ShearletsWsubsampling}. Again, one can observe 
a significant improvement using only 10\% of the ambient dimension. Note that by increasing the scales and in that way adding 
more elements to the dictionary the system gets more and more redundant and the reconstructions get more and more accurate. This
may sound confusing as we observed the opposite behaviour in Figure \ref{fig:ShearletsWOsubsampling}. There we saw that the 
redundancy made the $s$-term approximation worse.

\begin{figure}[H]
\centering
 \includegraphics[width=.3\textwidth]{./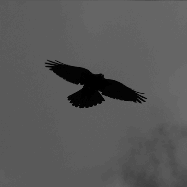}
 \ 
 \includegraphics[width=.3\textwidth]{./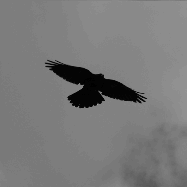}
 \ 
 \includegraphics[width=.3\textwidth]{./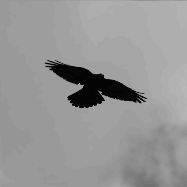}

 \includegraphics[width=.3\textwidth]{./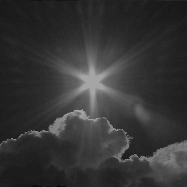}
 \ 
 \includegraphics[width=.3\textwidth]{./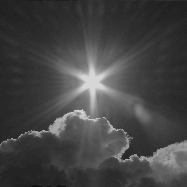}
 \ 
 \includegraphics[width=.3\textwidth]{./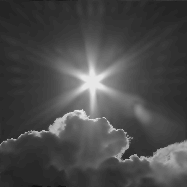}
 \caption{10\% of ambient dimension with subsampling (shearlets) using 2, 4, and 6 scales.}\label{fig:ShearletsWsubsampling}
\end{figure}

It is very evident from both numerical experiments that there is an intrinsic sparsity contained in the analysis coefficients
of a redundant transform that is not fully characterized by the sparsity assumption alone. One possible approach to tackle
this problem from a different perspective could be to rely on the \emph{statistical dimension}, \cite{LivingOnTheEdge}. This
is, however, not part of this work. Further, note that it is not correct to say, that the compression rate is not strong enough
to make the results of this paper to work as Figure \ref{fig:ShearletsWsubsampling} shows that clearly if one were to use a 
subsampled shearlet transform then the compression rate is good enough for our results to apply.

\section{Discussion and future work}

As we discussed in Section 2, it is the authors believe that the minimization problem should be performed over dual coefficients 
instead of transform coefficients if the $\Psi$-RIP is to be assumed. Surely, if the dual system and the primal system give rise
to the same sparsity pattern than this argument should no longer be valid which was in our analysis demonstrated by the concept
of identifiable duals. However, the design of sparse representation systems that have good duals is a challenging task in 
applied harmonic analysis. Nevertheless, it is necessary if one wants to combine compressed sensing with redundant dictionaries.

Furthermore, we want to comment on some problems that are left for future work:
\begin{itemize}
\item[--] In order to find the optimal $p$ the constants $C_1(p), C_2(p)$ should be optimized. For this one has to optimizing 
over $p, \gamma_1, \gamma_2, \delta, M,s $.
\item[--] The class of frames that have an identifiable dual can be seen as a generalization of scalable frames. Since the latter
have nice geometric characterizations \cite{KutOkoPhiTul} via open quadratic cones, it is interesting to see whether similar
characterizations can be computed for frames that have an identifiable dual.
\item[--] Another question left for future work is the replacement of the $\Psi$-RIP by an isometry condition as it has 
been done for the synthesis model in \cite{CanPla} in order to have a RIPless theory for the analysis formulation of the 
$\ell^p$-minimization problem. 
\item[--] Also an infinite dimensional scenario could be investigated for the $\ell^p$ scenario, in particular, in this case 
the $q$-controllability\eqref{eq:AssumptionFrameBounds} should be replaced by another condition that may shed more light on the 
redundancy problem.
\item[--] The redundancy is a very subtle issue. In Section \ref{sec:RedundancyCS} we have seen in Figure 
\ref{fig:WaveletsWsubsampling} and \ref{fig:SparsityLevels}, respectively, that there is an intrinsic sparsity structure
that has not been captured by the theory yet. It also shows that sparsity alone as in the classical sense, is not sophisticated
enough to explain why the analysis formulation works. In particular, the subsampling issue is coming from the implementations 
of such transforms, in our cases wavelets and shearlets, not from the actual theory. However, a possible point of future work
is to mathematically quantify this redundancy that arises in the discrete setting as presented in Section \ref{sec:RedundancyCS}.
A possible approach is to involve the statistical dimension or other geometric properties of the problem, 
\cite{LivingOnTheEdge,RomanGeometry}.
\end{itemize}
 
\section*{Acknowledgements}
The author would like to thank Ben Adcock for inspring discussions and acknowledges support from the Berlin Mathematical School 
as well as the DFG Collaborative Research Center TRR 109 "Discretization in Geometry and Dynamics". 

\begin{appendix}
\section{Proof of Theorem \ref{theorem:RIPKraNeeWar2}}

The optimal $\delta_s$ for which $\widetilde{A}= \begin{pmatrix} r_1 \\ \vdots \\ r_m \end{pmatrix} \in \R^{m \times n}$ 
satisfies the $\Psi$-RIP is given by
\begin{align*}
	\delta_s = \|\widetilde{A}^* \widetilde{A} - \Id_n\|_\Delta := \sup_{f \in \Delta} \langle(\widetilde{A}^* \widetilde{A} - \Id_n)x,x\rangle 
\end{align*}
where
\begin{align*}
\Delta = \{ x \in \ran \Psi^* \, : \, x = \Psi^* c, \|c\|_0 \leq s, \| x \| \leq 1\} \subseteq \R^n.
\end{align*}
Since $\E r_i^*r_i = \Id_n$ we have
\begin{align*}
	\delta_s = \| \widetilde{A}^*\widetilde{A} - \Id_n \|_\Delta = \left \| \frac{1}{m} \sum_{i=1}^m \frac{1}{m} _i^* r_i - \Id_N \right \|_\Delta = \frac{1}{m} \left \| \sum_{i=1}^m (r_i^*r_i - \E r_i^*r_i) \right \|_\Delta.
\end{align*}
For a Rademacher sequence $\eps = (\eps_i)_i$ independent of $(r_i)_i$ we have by Lemma 6.7 in \cite{Rau}
\begin{align*}
	\E \delta_s = \frac{1}{m}\E \left \| \sum_{i=1}^m (r_i^*r_i - \E r_i^*r_i) \right \|_\Delta \leq \frac{2}{m} \E \left \| \sum_{i=1}^m \eps_ir_i^*r_i \right \|_\Delta,
\end{align*}
hence,
\begin{align*}
	\E \delta_s \leq \frac{2}{m} \E_r \E_\eps \sup_{x \in \Delta} \left| \langle \sum_{i=1}^m \eps_ir_i^*r_i x, x \rangle \right| 
	= \frac{2}{m} \E_r \E_\eps \sup_{x \in \Delta} \left|  \sum_{i=1}^m \eps_i |\langle r_i, x \rangle|^2 \right|.
\end{align*}
Define the pseudo-metric
\begin{align*}
	d(x,y) = \left(\sum_{i=1}^m \left(|\langle r_i, x \rangle|^2 - |\langle r_i, y \rangle|^2 \right)^2\right)^{1/2}.
\end{align*}
Then, as shown in \cite{KraNeeWar} we have for $x,y \in \Delta$
\begin{align*}
	d(x,y) \leq 2 \sup_{z \in \Delta} \left( \sum_{i=1}^m | \langle r_i ,z \rangle|^{2p}\right)^{1/(2p)} \left( \sum_{i=1}^m | \langle r_i , x-y \rangle|^{2p}\right)^{1/(2p)} ,
\end{align*}
where $p,q \geq 1$ such that $p^{-1} + q^{-1} =1$.

Now, for any $h \in \Delta$ of the form $z = \Psi^*c$ with $\|c \|_0 \leq s$ and any realization of $(r_i)_i$ we have
\begin{align*}
	| \langle r_i, z \rangle | &= | \langle \Psi(\Psi^*\Psi)^{-1} r_i, \Psi \Psi^* c \rangle | \\
	&\leq \sum_{\lambda \leq N} | \langle r_i, \psitilde_\lambda \rangle | |(\Psi \Psi^* c)_\lambda | \\
	&\leq \sum_{\lambda \leq N} \frac{K}{c_1} |(\Psi \Psi^* c)_\lambda |,
\end{align*}
where $K>0$ is so that $\| \psi_\lambda \| \leq K$ for all $\lambda \leq N$ and $c_1$ denotes the lower frame bound.
Therefore
\begin{align*}
	|\langle r_i, h \rangle | \leq \frac{K}{c_1}  L \sqrt{s}
\end{align*}
with 
\begin{align*}
L = \sup_{ \substack{\| \Psi^* c \| =1 \\ \| c \|_0 \leq s}} \frac{\|  (\Psi \Psi^* c)_\lambda \|_1}{\sqrt{s}}.
\end{align*}
Therefore we obtain 
\begin{align*}
	\sup_{z \in \Delta} \left( \sum_{i=1}^m | \langle r_i, z \rangle|^{2p}\right)^{1/(2p)} &= \sup_{z \in \Delta} \left( \sum_{i=1}^m | \langle r_i, z \rangle|^{2}| \langle r_i, z \rangle|^{2p-2}\right)^{1/(2p)}\\
	&\leq \left(s\left(\frac{KL}{c_1}\right)^2 \right)^{(p-1)/(2p)} \left( \sup_{z \in \Delta} \sum_{i=1}^m | \langle r_i, z \rangle|^{2p}\right)^{1/(2p)}.
\end{align*}
The rest of the proof follows the argumentation given in \cite{KraNeeWar}.

For a set $\Sigma$, a metric $d$ and a given $t>0$ the \emph{covering number} $\mathcal{N}(\Sigma, d, t)$ is 
defined as the smallest number of balls of radius $t$ centered at points of $\Sigma$ necessary to cover $\Sigma$ 
with respect to $d$. By Dudley's inequality we have
\begin{align}
	\E_\eps \sup_{x \in \Delta} \left| \langle \sum_{i=1}^m \eps_i r_i^*r_i x, x \rangle \right| \leq 4 \sqrt{2} \int_0^\infty \sqrt{ \log( \mathcal{N}(\Delta, d, t)} \, dt. \label{eq:ExpCovNum}
\end{align}
Using the semi-norm
\begin{align*}
	\| x \|_{X,q} := \left( \sum_{i=1}^m | \langle r_i, x \rangle|^{2q}\right)^{1/(2q)}
\end{align*}
we obtain using covering arguments and \eqref{eq:ExpCovNum}
\begin{align*}
	\E_\eps \sup_{x \in \Delta} &\left| \langle \sum_{i=1}^m \eps_i r_i^*r_i x, x \rangle \right| \\
	&\leq C \left(s\left(\frac{KL}{c_1}\right)^2 \right)^{(p-1)/(2p)}  \left( \sum_{i=1}^m | \langle r_i, x \rangle|^{2q}\right)^{1/(2q)} \int_0^\infty \sqrt{ \log( \mathcal{N}(\Delta, \| \cdot \|_{X,q}, t)} \, dt.
\end{align*}
Now, following the arguments in \cite{KraNeeWar}  we have
\begin{align*}
	\int_0^\infty \sqrt{ \log( \mathcal{N}(\Delta, \| \cdot \|_{X,q}, t)} \, dt \leq C \sqrt{q(sL^2)m^{1/q}\log(n) \log^2(sL^2)}.
\end{align*}
Thus in \eqref{eq:ExpCovNum} we obtain
\begin{align*}
	\E \delta_s &\leq \frac{C\left(s\left(\frac{KL}{c_1}\right)^2 \right)^{(p-1)/(2p)}  \sqrt{qm^{1/q}sL^2\log(n)\log^2(sL^2)}}{m} \E \sup_{x \in \Delta}\left(\sum_{i=1}^m | \langle r_i, x \rangle|^2 \right)^{1/(2p)} \\
	\leq& \frac{C\left(s\left(\frac{KL}{c_1}\right)^2 \right)^{(p-1)/(2p)} \sqrt{q\log(n)\log^2(sL^2)}}{m^{1-1/(2q)-1/(2p)}} \E\left( \frac{1}{m} \| \sum_{i=1}^m r_i^*r_i - \Id_n \|_\Delta + \| \Id_n \|_\Delta\right)^{1/(2p)} \\
	\leq& \frac{C\left(s\left(\frac{KL}{c_1}\right)^2 \right)^{(p-1)/(2p)} \sqrt{q\log(n)\log^2(sL^2)}}{m^{1/2}} \sqrt{\E\delta_s + 1}.
\end{align*}
We can assume $K/c_1$ to greater than one, hence
\begin{align*}
	\E \delta_s
	\leq \frac{C\left(s\left(\frac{KL}{c_1}\right)^2 \right)^{(p-1)/(2p)} \sqrt{q\log(n)\log^2(s\left(\frac{KL}{c_1}\right)^2)}}{m^{1/2}} \sqrt{\E\delta_s + 1}.
\end{align*}
Choosing $p = 1+ (\log(s(KL)^2c_1^{-2}))^{-1}$ and $q = 1+ \log(s(KL)^2c_1^{-2})$ yields
\begin{align*}
	(s(KL)^2c_1^{-2})^{1/2+(p-1)/(2p)} \leq \sqrt{e},
\end{align*}
hence,
\begin{align*}
	\E\delta_s \leq C \sqrt{2\log(n) \log^2(s(KL)^2c_1^{-2})/m} \sqrt{\E \delta_s + 1}.
\end{align*}
Finally,
\begin{align*}
	\E \delta_s \leq C \sqrt{\frac{s(KL)^2c_1^{-2} \log(n) \log^3(sL^2)}{m}}
\end{align*}
provided $\frac{s(KL)^2c_1^{-2} \log(n) \log^3(s(KL)^2c_1^{-2})}{m}\leq 1$. Therefore, $\E \delta_s \leq \delta/2$ for some $\delta \in (0,1)$ if
\begin{align}
	m \geq C \delta^{-2}s(KL)^2c_1^{-2} \log^3(s(KL)^2c_1^{-2}) \log N. \label{eq:BoundForM}
\end{align}
Let $f_{x,y}(r) = \Rea ( \langle (r_i^*r_i - \Id_n)z,w \rangle)$ so that
\begin{align*}
 m \delta_s = \left \|\sum_{i=1}^m \left( r_i^* r_i - \E r_i^*r_i \right) \right\|_\Delta 
 = \sup_{x,y \in \Delta} \sum_{i = 1}^M f_{x,y}(r_i).
\end{align*}
Note that we have
\begin{itemize}
 \item[$\triangleright$] $\E f_{x,y}(r_i) = 0$,
 \item[$\triangleright$] $|f_{x,y}(r)| \leq  s (KL)^2c_1^{-2} +1$,
 \item[$\triangleright$] $\E|f_{x,y}(r)|^2 = \E\|(r_i^*r_i - \Id)x \|_2^2 \leq (s (KL)^2c_1^{-2} +1)^2$.
\end{itemize}
Now, fix some $\delta \in (0,1)$ and choose $m$ in accordance with \eqref{eq:BoundForM}. Then by Theorem 6.25 of \cite{Rau}
we haven
\begin{align}
 \mathbb{P}(\delta_s \geq \delta) &\leq \mathbb{P}(\delta_s \geq \E \delta_s + \delta/9) \nonumber\\
 &= \mathbb{P} \left( \left\| \sum_{i=1}^m(r_i^* r_i - \E r_i^*r_i)\right\|_\Delta \geq \E \left\| \sum_{i=1}^m(r_i^* r_i - \E r_i^*r_i)\right\|_\Delta + \delta m/9\right)\nonumber \\
&\leq \exp\left( - \frac{\left(\frac{\delta m}{9(s(KL)^2c_1^{-2} + 1)}\right)^2}{2m\left(1+ \frac{\delta}{s(KL)^2c_1^{-2} +1}\right) + \frac{2}{3} \left( \frac{\delta m}{9(s\eta^2 + 1}\right) }\right)\nonumber\\
&\leq \exp\left( - \frac{\delta^2 m}{Cs(KL)^2c_1^{-2}} \right), \label{eq:FinalBoundForM}
 \end{align}
 where the constant $C$ might changed in the last estimate. Further, if 
 \begin{align*}
 m \geq C \delta^{-2} s(KL)^2c_1^{-2}\log(1/\gamma),
 \end{align*} 
 then
 \eqref{eq:FinalBoundForM} is bounded by $\gamma$. Thus, $\delta_s \leq \delta$ with probability $1-\gamma$ if
 \begin{align*}
  m \geq C \delta^{-2}s (KL)^2c_1^{-2} \max\{ \log^3(s (KL)^2c_1^{-2}) \log(N), \log(1/\gamma)\}.
 \end{align*}

The proof is complete.

\end{appendix}
\bibliographystyle{abbrv}
\bibliography{bib}

\end{document}